\newcommand{\lrs}{\ensuremath{\mathit{LRS}}\xspace}
\newcommand{\lrsin}{\ensuremath{\mathit{LRS_{IN}}}\xspace}
\newcommand{\lrsout}{\ensuremath{\mathit{LRS_{OUT}}}\xspace}
\newcommand{\nteq}{n_{\teq}}
\newcommand{\pos}{\ensuremath{\mathit{POS}}\xspace}
\newcommand{\dom}[1][]{\ifthenelse{\equal{#1}{}}{\overline{D}}{\overline(D)(#1)}}
\newcommand{\set}[1]{\{#1\}}
\newcommand{\simple}{simple\xspace}
\newcommand{\rootterm}{root\xspace}
\newcommand{\rootterms}{roots\xspace}
\newcommand{\rootsym}[1][]{\ensuremath{\lceil{#1}\rceil}}
\newlength{\wordlength}
\newcommand{\mathwordbox}[3][c]{\settowidth{\wordlength}{$#3$}\makebox[\wordlength][#1]{$#2$}}
\def\nd{4em} 	% Node distance
\def\ra{0.75em} % Node radius
\tikzset{node distance = \nd, 
		vertex/.style={circle,draw,minimum size=2*\ra, inner sep=1pt, fill=white}
		}
\newcommand{\tc}[0]{\ensuremath{\mathit{TC}}\xspace}
\newcommand{\uc}[0]{\ensuremath{\mathit{UC}}\xspace}
\newcommand{\mc}[0]{\ensuremath{\mathit{MC}}\xspace}
\newcommand{\ba}[0]{\ensuremath{\mathit{BA}}\xspace}
\newcommand{\bp}[0]{\ensuremath{\mathit{BP}}\xspace}
\newcommand{\teq}[0]{\ensuremath{\mathit{TEQ}}\xspace}
\newcommand{\me}[0]{\ensuremath{\mathit{ME}}\xspace}
\newcommand{\tcring}[0]{\ensuremath{\mathit{\mathring{TC}}}\xspace}
\DeclareMathAccent{\widehat}{\mathord}{largesymbols}{"62} %% fixed widehat for double widehat (stolen from yhmath)
\newcounter{remark}
\newenvironment{remark}[1][]{\refstepcounter{remark} \ifthenelse{\equal{#1}{}}{\medskip\noindent\textbf{Remark~\theremark. }}{\noindent \textbf{Remark~\theremark~(#1).}}}{\medskip}
\title{On the Structure of\\ Stable Tournament Solutions}
\author{
Felix Brandt\\
Technical University of Munich\\Germany
\and
Markus Brill\\
Oxford University\\United Kingdom
\and
Hans Georg Seedig\\
Technical University of Munich\\Germany
\and
Warut Suksompong\\
Stanford University\\USA
}
\date{}
\begin{document}
	
\maketitle

\begin{abstract}
A fundamental property of choice functions is stability, which, loosely speaking, prescribes that choice sets are invariant under adding and removing unchosen alternatives. 
%Despite the simplicity of this property, 
%Proving or disproving that certain choice functions are stable has turned out to be rather difficult. 
We provide several structural insights that improve our understanding of stable choice functions. In particular, 
\emph{(i)}~we show that every stable choice function is generated by a unique simple choice function, which never excludes more than one alternative, \emph{(ii)}~we completely characterize which simple choice functions give rise to stable choice functions, and \emph{(iii)}~we prove a strong relationship between stability and a new property of tournament solutions called \emph{local reversal symmetry}. Based on these findings, we provide the first concrete tournament---consisting of 24 alternatives---in which the tournament equilibrium set fails to be stable. Furthermore, we prove that there is no more discriminating stable tournament solution than the bipartisan set and that the bipartisan set is the unique most discriminating tournament solution which satisfies standard properties proposed in the literature.
\end{abstract}

\section{Introduction}

Given a set of alternatives and binary non-transitive preferences over these alternatives, how can we consistently choose the ``best'' elements from any feasible subset of alternatives? This question has been studied in detail in the literature on tournament solutions \citep[see, \eg][]{Lasl97a,Hudr09a,Mose15a,BBH15a}. The lack of transitivity is typically attributed to the independence of pairwise comparisons as they arise in sports competitions, multi-criteria decision analysis, and preference aggregation.\footnote{Due to their generality, tournament solutions have also found applications in unrelated areas such as biology \citep{Schj22a,Land51a,Slat61a,AlLe11a}.} 
In particular, the pairwise majority relation of a profile of transitive individual preference relations often forms the basis of the study of tournament solutions. This is justified by a theorem due to \citet{McGa53a}, which shows that every tournament can be induced by some underlying preference profile. Many tournament solutions therefore correspond to well-known social choice functions such as Copeland's rule, Slater's rule, the Banks set, and the bipartisan set.

Over the years, many desirable properties of tournament solutions have been proposed. Some of these properties, so-called \emph{choice consistency conditions}, make no reference to the actual tournament but only relate choices from different subtournaments to each other. An important choice consistency condition, that goes under various names, requires that the choice set is invariant under the removal of unchosen alternatives. In conjunction with a dual condition on expanded feasible sets, this property is known as \emph{stability} \citep{BrHa11a}. Stability implies that choices are made in a robust and coherent way. Furthermore, stable choice functions can be rationalized by a preference relation on \emph{sets} of alternatives. 

Examples of stable tournament solutions are the \emph{top cycle}, \emph{the minimal covering set}, and the \emph{bipartisan set}. The latter is elegantly defined via the support of the unique mixed maximin strategies of the zero-sum game given by the tournament's skew-adjacency matrix.
Curiously, for some tournament solutions, including the \emph{tournament equilibrium set} and the \emph{minimal extending set}, proving or disproving stability turned out to be exceedingly difficult. As a matter of fact, whether the tournament equilibrium set satisfies stability was open for more than two decades before the existence of counterexamples with about $10^{136}$ alternatives was shown using the probabilistic method. %\citep{BCK+11a}.

\citet{Bran11b} systematically constructed stable tournament solutions by applying a well-defined operation to existing (non-stable) tournament solutions. \citeauthor{Bran11b}'s study was restricted to a particular class of generating tournament solutions, namely tournament solutions that can be defined via qualified subsets (such as the \emph{uncovered set} and the \emph{Banks set}). For any such generator, \citet{Bran11b} gave sufficient conditions for the resulting tournament solution to be stable.
%$\widehat{S}$ to satisfy stability, namely the existence of unique minimal $S$-stable sets in every tournament or, equivalently, the  non-existence of tournaments that admit two disjoint minimal $S$-stable sets.\footnote{$\widehat{\alpha}$ follows from Theorem 4 by \citet{Bran11b}, $\widehat{\gamma}$ from Lemma 2 by \citet{BrHa11a}. The equivalence of both sufficient conditions was proven in Lemma 2 by \citet{Bran11b}.}
Later, \citet{BHS15a} showed that for one particular generator, the Banks set, the sufficient conditions for stability are also necessary.

In this paper, we show that \emph{every} stable choice function is generated by a unique underlying simple choice function, which  never excludes more than one alternative.
We go on to prove a general characterization of stable tournament solutions that is not restricted to generators defined via qualified subsets. As a corollary, we obtain that the sufficient conditions for generators defined via qualified subsets are also necessary.  Finally, we prove a strong connection between stability and a new property of tournament solutions called \emph{local reversal symmetry}. Local reversal symmetry requires that an alternative is chosen if and only if it is unchosen when all its incident edges are inverted. This result allows us to settle two important problems in the theory of tournament solutions. 
We provide the first concrete tournament---consisting of 24 alternatives---in which the tournament equilibrium set violates stability. Secondly, we prove that there is no more discriminating stable tournament solution than the bipartisan set. We also axiomatically characterize the bipartisan set by only using properties that have been previously proposed in the literature. We believe that these results serve as a strong argument in favor of the  bipartisan set if choice consistency is desired.

\section{Stable Sets and Stable Choice Functions}
\label{sec:stability}

Let $U$ be a universal set of alternatives. Any finite non-empty subset of $U$ will be called a \emph{feasible set}. 
Before we analyze tournament solutions in \secref{sec:tsolutions}, we first consider a more general model of choice which does not impose any structure on feasible sets.
A \emph{choice function} is a function that maps every feasible set $A$ to a non-empty subset of $A$ called the \emph{choice set} of $A$. For two choice functions $S$ and $S'$, we write $S'\subseteq S$, and say that $S'$ is a \emph{refinement} of~$S$ and~$S$ a \emph{coarsening} of~$S'$, if $S'(A)\subseteq S(A)$ for all feasible sets~$A$. 
A choice function $S$ is called \emph{trivial} if $S(A)=A$ for all feasible sets $A$.

\citet{Bran11b} proposed a general method for refining a choice function~$S$ by defining minimal sets that satisfy internal and external stability criteria with respect to~$S$, similar to von-Neumann--Morgenstern stable sets in cooperative game theory.\footnote{This is a generalization of earlier work by \citet{Dutt88a}, who defined the minimal covering set as the unique minimal set that is internally and externally stable with respect to the uncovered set (see \secref{sec:tsolutions}).}

A subset of alternatives $X\subseteq A$ is called $S$-\emph{stable} within feasible set $A$ for choice function $S$ if it consists precisely of those alternatives that are chosen in the presence of all alternatives in $X$. Formally, $X$ is $S$-stable in $A$ if
\[
X=\{a\in A \colon a\in S(X\cup \{a\})\}\text. 
\]
Equivalently, $X$ is $S$-stable if and only if
\begin{gather}
S(X)=X \text{, and} \tag{internal stability}\\
a \notin S(X\cup\{a\}) \text{ for all }a\in A\setminus X\text. \tag{external stability}
\end{gather}
The intuition underlying this formulation is that there should be no reason to restrict the choice set by excluding some alternative from it (internal stability) and there should be an argument against each proposal to include an outside alternative into the choice set (external stability). 

An $S$-stable set is \emph{inclusion-minimal} (or simply \emph{minimal}) if it does not contain another $S$-stable set. $\widehat{S}(A)$ is defined as the union of all minimal $S$-stable sets in $A$. 
% In general, $S$-stable sets are not guaranteed to exist, and therefore $\widehat{S}$ may fail to be a choice function. We say that $\widehat{S}$ is \emph{well-defined} if every choice set admits exactly one minimal $S$-stable set. 
$\widehat{S}$ defines a choice function whenever every feasible set admits at least one $S$-stable set. In general, however, neither the existence of $S$-stable sets nor the uniqueness of minimal $S$-stable sets is guaranteed. We say that $\widehat{S}$ is \emph{well-defined} if every choice set admits exactly one minimal $S$-stable set. We can now define the central concept of this paper.

	\begin{definition}	
	A choice function $S$ is \emph{stable} if $\widehat{S}$ is well-defined and $S=\widehat{S}$.
         \end{definition}
Stability is connected to rationalizability and non-manipulability. In fact, every stable choice function can be rationalized via a preference relation on \emph{sets} of alternatives \citep{BrHa11a} and, in the context of social choice, stability and monotonicity imply strategyproofness with respect to Kelly's preference extension \citep{Bran11c}.

The following example illustrates the preceding definitions. Consider universe $U=\{a,b,c\}$ and choice function $S$ given by the table below (choices from singleton sets are trivial and therefore omitted). 
\[
	\begin{array}{ccc}
		% \toprule
		X	&	S(X) & \widehat{S}(X)\\\midrule
		% \set{a}		&\set{a} & \set{a}\\
		% \set{b}		&\set{b} & \set{b}\\
		% \set{c}		&\set{c} & \set{c}\\
		\set{a,b}	& \set{a} & \set{a}\\
		\set{b,c}	&\set{b} & \set{b}\\
		\set{a,c}	&\set{a} & \set{a}\\
		\set{a,b,c}	&\set{a,b,c} & \set{a}\\
		% \bottomrule
	\end{array}
\]
The feasible set $\{a,b,c\}$ admits exactly two $S$-stable sets, $\{a,b,c\}$ itself and $\{a\}$. The latter holds because $S(\{a\})=\{a\}$ (internal stability) and $S(\{a,b\})=S(\{a,c\})=\{a\}$ (external stability).
All other feasible sets $X$ admit unique $S$-stable sets, which coincide with $S(X)$. Hence, $\widehat{S}$ is well-defined and given by the entries in the rightmost column of the table. Since $S\neq \widehat{S}$, $S$ fails to be stable. $\widehat{S}$, on the other hand, satisfies stability.

Choice functions are usually evaluated by checking whether they satisfy choice consistency conditions that relate choices from different feasible sets to each other. 
The following two properties, $\widehat\alpha$ and $\widehat\gamma$, are set-based variants of Sen's $\alpha$ and $\gamma$ \citep{Sen71a}.
$\widehat{\alpha}$ is a rather prominent choice-theoretic condition, also known as \citeauthor{Cher54a}'s \emph{postulate~$5^*$} \citep{Cher54a}, the \emph{strong superset property} \citep{Bord79a}, \emph{outcast}~\citep{AiAl95a}, and the \emph{attention filter axiom} \citep{MNO12a}.\footnote{We refer to \citet{Monj08a} for a more thorough discussion of the origins of this condition.}
		
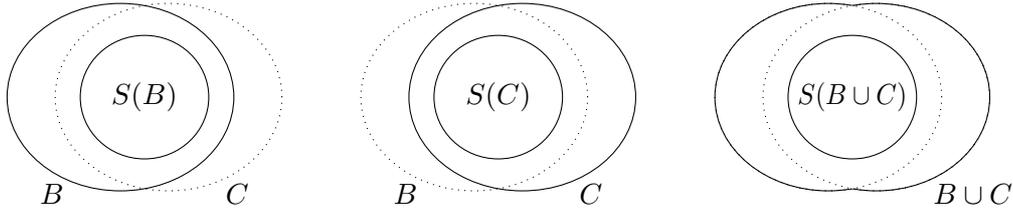
\begin{figure}[tb]
		\[
			\scalebox{1}{
			\begin{tikzpicture}[scale=1]
				% \draw(0,0) node(){x};
				 \draw (  0:9pt) node[ellipse,inner xsep=30pt,inner ysep=25pt,draw,dotted](B){} ++(220:57pt) node(){$B$}; 
				 \draw (180:9pt) node[ellipse,inner xsep=30pt,inner ysep=25pt,draw](C){}        ++(-40:57pt) node(){$C$};			 
				 \draw (  0:0pt) node[ellipse,inner xsep=17pt,inner ysep=11pt,draw](S){\mathwordbox{\scalebox{1}[1]{$S(B)$}}{}} ;			 
			\end{tikzpicture}
			}
			\qquad
			\scalebox{1}{
			\begin{tikzpicture}[scale=1]
				% \draw(0,0) node(){x};
				 \draw (  0:9pt) node[ellipse,inner xsep=30pt,inner ysep=25pt,draw](B){}       ++(220:57pt) node(){$B$}; 
				 \draw (180:9pt) node[ellipse,inner xsep=30pt,inner ysep=25pt,draw,dotted](C){}++(-40:57pt) node(){$C$};			 
				 \draw (  0:0pt) node[ellipse,inner xsep=17pt,inner ysep=11pt,draw](S){\mathwordbox{\scalebox{1}[1]{$S(C)$}}{}};			 
			\end{tikzpicture}
			}
			\qquad
			\scalebox{1}{
			\begin{tikzpicture}[scale=1]
				% \draw(0,0) node(){x};
				 \draw (  0:9pt) node[ellipse,inner xsep=30pt,inner ysep=25pt,draw,fill=white](B){} ; 
				 \draw (180:9pt) node[ellipse,inner xsep=30pt,inner ysep=25pt,draw,fill=white](C){}++(-40:57pt) node(){$\mathwordbox[l]{B\cup C}{C}$} ;
				 \draw (0:0pt)	 node[ellipse,draw=white,fill=white,inner xsep=30pt, inner ysep=24.23pt]{};
				 \draw (180:9pt) node[ellipse,inner xsep=30pt,inner ysep=25pt,draw,dotted](B){} ;
				 \draw (  0:9pt) node[ellipse,inner xsep=30pt,inner ysep=25pt,draw,dotted](C){} ;
				 \draw (  0:0pt) node[ellipse,inner xsep=17pt,inner ysep=11pt,draw](S){\mathwordbox{\scalebox{.9}[1]{$S(B\cup C)$}}{}} ;
			\end{tikzpicture}
			}
			\]
	\caption{Visualization of stability. 
		A stable choice function~$S$ chooses a set from both~$B$ (left) and~$C$ (middle) if and only if it chooses the same set from~$B\cup C$ (right). The direction from the left and middle diagrams to the right diagram corresponds to $\widehat{\gamma}$ while the converse direction corresponds to $\widehat{\alpha}$.
	}
	\label{fig:stability-illustration}
\end{figure}		

		\begin{definition}
		A choice function~$S$ satisfies $\widehat\alpha$ if for all feasible sets~$B$ and $C$,  
		\[
		\tag{$\widehat \alpha$}
			\text{
			    $S(B)\subseteq C\subseteq B$ implies $S(C)= S(B)$.
			}
		\]
		A choice function~$S$ satisfies $\widehat\gamma$ if for all feasible sets~$B$ and $C$,
		\[
		\tag{$\widehat\gamma$} 
			\text{ 
			    $S(B)=S(C)$ implies $S(B)=S(B\cup C)$.
			}
		\]
		\end{definition}

It has been shown that stability is equivalent to the conjunction of $\widehat\alpha$ and $\widehat\gamma$. 

\begin{theorem}[\citealp{BrHa11a}]\label{thm:BrHa}
A choice function is stable if and only if it satisfies $\widehat\alpha$ and $\widehat\gamma$.
\end{theorem}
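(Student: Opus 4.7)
The plan is to prove both implications by leveraging two recurring observations. First, from the well-definedness of $\widehat S$: whenever $S$ is stable, the unique minimal $S$-stable set in $A$—namely $S(A)$—is contained in every $S$-stable set of $A$ (any $S$-stable set contains some inclusion-minimal one by finite descent, and uniqueness pins this down to $S(A)$). Second, a transfer principle: if $X$ is $S$-stable in a feasible set $A$ and $X \subseteq T \subseteq A$, then $X$ is also $S$-stable in $T$, since its external stability conditions at elements of $T \setminus X$ are inherited from those at elements of $A \setminus X$.

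For the $(\Leftarrow)$ direction, assume $S$ satisfies $\widehat\alpha$ and $\widehat\gamma$. First I would verify that $S(A)$ is always $S$-stable in $A$: internal stability $S(S(A)) = S(A)$ is the instance of $\widehat\alpha$ with $C = S(A)$, and external stability at $a \in A \setminus S(A)$ is the instance with $C = S(A) \cup \{a\}$, which yields $S(S(A) \cup \{a\}) = S(A)$ and in particular $a \notin S(S(A) \cup \{a\})$. I would then show that every $S$-stable set $X$ in $A$ satisfies $S(A) \subseteq X$: for each $b \in S(A) \setminus X$, external stability of $X$ combined with $\widehat\alpha$ gives $S(X \cup \{b\}) = X$; iterating $\widehat\gamma$ over all such $b$, by induction on $|S(A) \setminus X|$, produces $S(X \cup S(A)) = X$, while a final application of $\widehat\alpha$ to $A \supseteq X \cup S(A) \supseteq S(A)$ gives $S(X \cup S(A)) = S(A)$, forcing $X = S(A)$. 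Hence $S(A)$ is the unique minimal $S$-stable set in $A$, so $\widehat S = S$ is well-defined and $S$ is stable.

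For the $(\Rightarrow)$ direction, assume $S$ is stable. For $\widehat\alpha$, let $T = S(B)$ with $T \subseteq C \subseteq B$. By the transfer principle applied to $T$ (stable in $B$) inside $C$, $T$ is $S$-stable in $C$, so $S(C) \subseteq T$ by the first observation. Conversely, $S(C)$ is itself $S$-stable in $C$ and lies inside $T$, so a second use of the transfer principle makes $S(C)$ $S$-stable in $T$. Since $S(T) = T$ by internal stability, the first observation applied inside $T$ forces every $S$-stable set of $T$ to contain $T$, so $S(C) = T$. The argument for $\widehat\gamma$ is completely analogous: given $T = S(B) = S(C)$, external stability of $T$ in $B$ and in $C$ combine to show $T$ is $S$-stable in $B \cup C$, so $S(B \cup C) \subseteq T$; then $S(B \cup C)$, as an $S$-stable subset of $T$, is $S$-stable in $T$ by the transfer principle, and the first observation in $T$ forces $S(B \cup C) = T$.

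The most delicate step is the iterated application of $\widehat\gamma$ in the $(\Leftarrow)$ direction, but this is routine by induction; neither direction presents a substantial obstacle once the two principles above are in hand. The cleanliness of the argument really stems from the observation that the feasible set $T = S(B)$ serves as a minimal ambient context: inside $T$, the unique minimal $S$-stable set is $T$ itself, which immediately upgrades any inclusion $S(C) \subseteq T$ (or $S(B \cup C) \subseteq T$) to equality once the inner choice set is exhibited as $S$-stable in $T$.
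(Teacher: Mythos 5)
Your proof is correct. Note that the paper itself does not prove this statement---it is imported from \citet{BrHa11a}---so there is no in-paper argument to compare against; judged on its own, your two standing observations are sound: the ``transfer principle'' (an $S$-stable set in $A$ remains $S$-stable in any intermediate set containing it) is immediate from the definition, and under stability the set $S(A)=\widehat{S}(A)$ is the unique minimal $S$-stable set in $A$ and, by finite descent plus uniqueness, is contained in every $S$-stable set of $A$. With these, both directions go through exactly as you describe: in $(\Leftarrow)$, $\widehat\alpha$ makes $S(A)$ internally and externally stable, and the $\widehat\alpha$-plus-iterated-$\widehat\gamma$ argument pins down every $S$-stable set as $S(A)$, giving well-definedness and $S=\widehat S$; in $(\Rightarrow)$, transferring $S(B)$ down to $C$ (resp.\ $S(B)=S(C)$ up to $B\cup C$) and then transferring the inner choice set into $T=S(B)$, where $S(T)=T$ forces equality, yields $\widehat\alpha$ and $\widehat\gamma$. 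Two cosmetic remarks: your $(\Leftarrow)$ argument actually proves the stronger fact that $S(A)$ is the \emph{only} $S$-stable set in $A$, and you could streamline it by iterating $\widehat\gamma$ over all of $A\setminus X$ (getting $S(A)=X$ directly) rather than over $S(A)\setminus X$; also the induction should be anchored at the base case $S(X)=X$ given by internal stability. Neither point affects correctness.
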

Hence, a choice function~$S$ is stable if and only if for all feasible sets~$B$, $C$, and~$X$ with $X\subseteq B\cap C$,
		\[
			\text{
				$X=S(B)$ and $X=S(C)$
				\quad if and only if \quad 
				$X=S(B\cup C)$.
			}
		\]
Stability, $\widehat{\alpha}$, and $\widehat{\gamma}$ are illustrated in \figref{fig:stability-illustration}.

For a finer analysis, we split $\widehat{\alpha}$ and $\widehat{\gamma}$ into two conditions \citep[][Remark 1]{BrHa11a}.
\begin{definition}\label{def:greek-letter-properties}
	A choice function $S$ satisfies
	\begin{itemize}
		\item $\widehat{\alpha}_{_\subseteq}$ if for all $B,C$, it holds that $S(B)\subseteq C\subseteq B$ implies $S(C)\subseteq S(B)$,\footnote{$\widehat\alpha_{_\subseteq}$ has also been called the \emph{A\"izerman property} or the \emph{weak superset property} \citep[\eg][]{Lasl97a,Bran11b}.}
		\item $\widehat{\alpha}_{_\supseteq}$ if for all $B,C$, it holds that $S(B)\subseteq C\subseteq B$ implies $S(C)\supseteq S(B)$,
		\item $\widehat{\gamma}_{_\subseteq}$ if for all $B,C$, it holds that $S(B)=S(C)$ implies $S(B)\subseteq S(B\cup C)$, and
		\item $\widehat{\gamma}_{_\supseteq}$ if for all $B,C$, it holds that $S(B)=S(C)$ implies $S(B)\supseteq S(B\cup C)$.
	\end{itemize}
\end{definition}

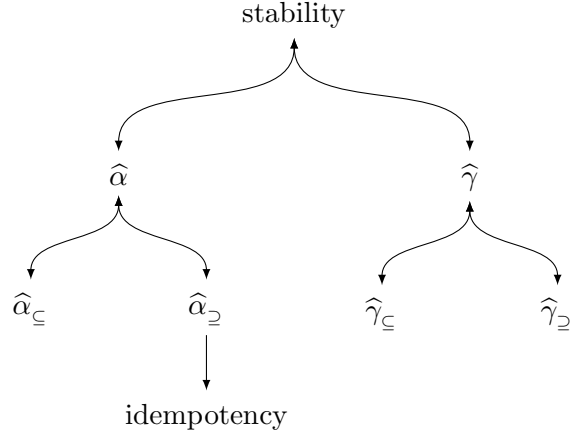
\begin{figure}[tb]
	\centering
		\begin{tikzpicture}[node distance=3em]
		\tikzstyle{pfeil}=[latex-latex, shorten >=1pt,draw]
		\tikzstyle{onlytext}=[]

		\node[onlytext] (SS) at (0,0) {stability};
		\node[onlytext] (ahat) [below=of SS,yshift=-1em,xshift=-6em] 			{$\widehat\alpha$};
		\node[onlytext] (ahat-incl) [below=of ahat,xshift=-3em] 	{$\widehat\alpha_{_\subseteq}$};
		% \node[onlytext] (ahats-and) [below=of ahat]				{$\wedge$};
		\node[onlytext] (ahat-sup) [below=of ahat,xshift=3em] 	{$\widehat\alpha_{_\supseteq}$};
		% \node[onlytext] (ahat-and-ghat) [below=of SS]			{$\wedge$};
		\node[onlytext] (ghat) [below=of SS,yshift=-1em,xshift=6em] 			{$\widehat\gamma$};
		\node[onlytext] (ghat-incl) [below=of ghat,xshift=-3em] 	{$\widehat\gamma_{_\subseteq}$};
		% \node[onlytext] (ghats-and) [below=of ghat]				{$\wedge$};
		\node[onlytext] (ghat-sup) [below=of ghat,xshift=3em] 	{$\widehat\gamma_{_\supseteq}$};
		\node[onlytext,node distance=2em] (ide) [below=of ahat-sup]	{idempotency};
		
		% \draw[pfeil] (SS.south)	.. controls +(270:2em) and +(90:1em) .. (ahat.north);
		\draw[pfeil] (SS) 	to[out=270,in=90] (ahat); 
		\draw[pfeil] (ahat)	to[out=270,in=90] (ahat-incl);
		\draw[pfeil] (ahat) to[out=270,in=90] (ahat-sup);
		\draw[pfeil] (SS) 	to[out=270,in=90] (ghat);
		\draw[pfeil] (ghat) to[out=270,in=90] (ghat-incl);
		\draw[pfeil] (ghat) to[out=270,in=90] (ghat-sup);		
		\draw[pfeil,-latex] (ahat-sup) to[out=270,in=90] (ide);
	\end{tikzpicture}
	\caption{Logical relationships between choice-theoretic properties.}
	\label{fig:stability-properties}
\end{figure}

Obviously, for any choice function $S$ we have 
\begin{align*}
%	S \text{ satisfies stability} \quad&\text{ if and only if }\quad S \text{ satisfies }\widehat{\alpha} \text{ and } \widehat{\gamma}  \text{, and} \\	
	S \text{ satisfies }\widehat{\alpha} \quad&\text{ if and only if }\quad S \text{ satisfies }\widehat{\alpha}_{_\subseteq} \text{ and } \widehat{\alpha}_{_\supseteq}  \text{, and} \\
	S \text{ satisfies }\widehat{\gamma} \quad&\text{ if and only if }\quad S \text{ satisfies } \widehat{\gamma}_{_\subseteq} \text{ and } \widehat{\gamma}_{_\supseteq}. 
\end{align*}
%
% It is also known \citep[see, \eg][Prop.~2.3.10]{Lasl97a} that
% \[
% 	\widehat{\alpha}_{_\subseteq} \wedge \text{idempotency} \Rightarrow \widehat{\alpha}. 
% \]
%
A choice function is \emph{idempotent} if the choice set is invariant under repeated application of the choice function, 
% \ie $S(T|_{S(T)})=S(T)$ for all tournaments~$T$.
\ie $S(S(A))=S(A)$ for all feasible sets~$A$.
It is easily seen that $\widehat{\alpha}_{_\supseteq}$ is stronger than idempotency since $S(S(A))\supseteq S(A)$ implies $S(S(A))=S(A)$.

\figref{fig:stability-properties} shows the logical relationships between stability and its weakenings.

\section{Generators of Stable Choice Functions}
\label{sec:generators}

We say that a choice function $S'$ \emph{generates} a stable choice function $S$ if $S=\widehat{S'}$. Understanding stable choice functions can be reduced to understanding their generators. 
It turns out that important generators of stable choice functions are \emph{simple} choice functions, \ie choice functions $S'$ with $|S'(A)|\ge |A|-1$ for all $A$. In fact, every stable choice function $S$ is generated by a unique \simple choice function.
To this end, we define the \emph{\rootterm} of a choice function $S$ as 
\[
\rootsym[S](A)=\begin{cases}S(A) &\mbox{if } |S(A)|=|A|-1\text{,} \\ A &\mbox{otherwise.}\end{cases}
\]
Not only does $\rootsym[S]$ generate $S$, but any choice function sandwiched between $S$ and $\rootsym[S]$ is a generator of $S$.

\begin{theorem}
\label{thm:generatorS}
Let $S$ and $S'$ be choice functions such that $S$ is stable and $S\subseteq S'\subseteq \rootsym[S]$. Then, $\widehat{S'}$ is well-defined and $\widehat{S'}=S$. 
In particular, $S$ is generated by the simple choice function $\rootsym[S]$.
\end{theorem}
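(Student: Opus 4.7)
The plan is to prove, for every feasible set $A$, that $S(A)$ is the unique minimal $S'$-stable set in $A$; this immediately gives that $\widehat{S'}$ is well-defined with $\widehat{S'}=S$, and the ``in particular'' clause follows by specializing to $S'=\rootsym[S]$ (which lies in the allowed range since $S\subseteq\rootsym[S]$ by inspection of the definition). Throughout I would use the consequence of Theorem~\ref{thm:BrHa} that $S$ satisfies both $\widehat\alpha$ and $\widehat\gamma$, and in particular that $S$ is idempotent. The key leverage of the hypothesis $S\subseteq S'\subseteq\rootsym[S]$ is that it pins $S'(X)$ down to $S(X)$ whenever $S(X)$ and $\rootsym[S](X)$ already coincide, i.e.\ whenever $|S(X)|\in\{|X|,\,|X|-1\}$.

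First I would verify that $S(A)$ is $S'$-stable. For internal stability, idempotency yields $S(S(A))=S(A)$, so the ``otherwise'' branch of the definition gives $\rootsym[S](S(A))=S(A)$, and the sandwich forces $S'(S(A))=S(A)$. For external stability, fix $a\in A\setminus S(A)$ and set $B=S(A)\cup\{a\}$; applying $\widehat\alpha$ to the chain $S(A)\subseteq B\subseteq A$ yields $S(B)=S(A)$, so $|S(B)|=|B|-1$ and hence $\rootsym[S](B)=S(B)$. The sandwich then forces $S'(B)=S(A)$, excluding $a$ as required.

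Next I would show that every $S'$-stable set $X\subseteq A$ satisfies $S(A)\subseteq X$. For any $a\in A\setminus X$, external stability of $X$ together with $S\subseteq S'$ gives $a\notin S(X\cup\{a\})$, hence $S(X\cup\{a\})\subseteq X$; applying $\widehat\alpha$ to $S(X\cup\{a\})\subseteq X\subseteq X\cup\{a\}$ promotes this to the equality $S(X)=S(X\cup\{a\})$. A short induction on $|Y|$ using $\widehat\gamma$ (at each step, apply $\widehat\gamma$ to $B=X\cup\{a\}$ and $C=X\cup(Y\setminus\{a\})$) then extends this to $S(X)=S(X\cup Y)$ for every $Y\subseteq A\setminus X$; taking $Y=A\setminus X$ gives $S(A)=S(X)\subseteq X$. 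Combined with the previous paragraph, $S(A)$ is the minimum, and hence unique minimal, $S'$-stable set in $A$.

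The step I expect to require the most care is the $\widehat\gamma$-induction in the third paragraph: external stability of $X$ by itself only delivers pairwise information about the sets $X\cup\{a\}$ for single outsiders $a$, and one really needs to combine $\widehat\alpha$ (to convert each one-element ``exclusion'' into the genuine equality $S(X)=S(X\cup\{a\})$) with $\widehat\gamma$ (to glue all these pairwise equalities into the global identity $S(X)=S(A)$). Without invoking the full strength of stability of $S$, the jump from pairwise to global would fail, which is exactly why the weaker conditions $\widehat\alpha_{_\subseteq}$ or $\widehat\alpha_{_\supseteq}$ alone would not suffice here.
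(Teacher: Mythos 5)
Your proof is correct, and its overall architecture is the same as the paper's: show that $S(A)$ is $S'$-stable in $A$, show that every $S'$-stable set in $A$ contains $S(A)$, and conclude that $S(A)$ is the unique minimal $S'$-stable set, with the ``in particular'' clause following from $S\subseteq\rootsym[S]$. The first half matches the paper's argument almost verbatim (the paper phrases it as ``every $S$-stable set is $S'$-stable,'' applied to $S(A)$). The one genuine divergence is in the second half: the paper shows that any $S'$-stable set $X$ contains the $S$-stable set $S(X)$ (verified purely with $\widehat\alpha$, including the case $b\in X\setminus S(X)$) and then invokes the definition of stability---uniqueness of the minimal $S$-stable set, which is $S(A)$---to get $S(A)\subseteq X$; you instead skip the intermediate $S$-stable set and run an explicit $\widehat\gamma$-induction gluing the single-outsider equalities $S(X)=S(X\cup\{a\})$ into $S(X)=S(X\cup(A\setminus X))=S(A)$, so that $S(A)=S(X)\subseteq X$ directly. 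Both routes use the full strength of stability (yours through the $\widehat\alpha\wedge\widehat\gamma$ characterization of \thmref{thm:BrHa}, the paper's through $S=\widehat S$), and your closing remark correctly identifies why the $\widehat\gamma$ step cannot be dispensed with; your version is marginally more self-contained, while the paper's is slightly shorter because uniqueness of minimal $S$-stable sets does the gluing for free.
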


\begin{proof}
We first show that any $S$-stable set is also $S'$-stable. Suppose that a set $X\subseteq A$ is $S$-stable in $A$. Then $S(X)=X$, and $a\not\in S(X\cup\{a\})$ for all $a\in A\backslash X$. Since $S$ satisfies~$\widehat{\alpha}$, we have $S(X\cup\{a\})=X$ and therefore $\rootsym[S](X\cup\{a\})=X$ for all $a\in A\backslash X$. Using the inclusion relationship $S\subseteq S'\subseteq \rootsym[S]$, we find that $S'(X)=X$ and $S'(X\cup\{a\})=X$ for all $a\in A\backslash X$. Hence, $X$ is $S'$-stable in $A$.

Next, we show that every $S'$-stable set contains an $S$-stable set. Suppose that a set $X\subseteq A$ is $S'$-stable in $A$. Then $S'(X)=X$ and $a\not\in S'(X\cup\{a\})$ for all $a\in A\backslash X$. Using the relation $S\subseteq S'$, we find that $a\not\in S(X\cup\{a\})$ for all $a\in A\backslash X$. We will show that $S(X)\subseteq X$ is $S$-stable in $A$. Since $S$ satisfies $\widehat{\alpha}$, we have $S(S(X))=S(X)$ and $S(X\cup\{a\})=S(X)$ for all $a\in A\backslash X$. It remains to show that $b\not\in S(S(X)\cup\{b\})$ for all $b\in A\backslash S(X)$. If $b\in A\backslash X$, we already have that $S(X\cup\{b\})=S(X)$ and therefore $S(S(X)\cup\{b\})=S(X)$ by $\widehat{\alpha}$. Otherwise, if $b\in X\backslash S(X)$, $\widehat{\alpha}$ again implies that $S(S(X)\cup\{b\})=S(X)$.

Since $S$ is stable, for any feasible set $A$ there exists a unique minimal $S$-stable set in $A$, which is given by $S(A)=\widehat{S}(A)$. From what we have shown, this set is also $S'$-stable, and moreover any $S'$-stable set contains an $S$-stable set which in turn contains $S(A)$. Hence $S(A)$ is also the unique minimal $S'$-stable set in $A$. This implies that $\widehat{S'}$ is well-defined and $\widehat{S'}=\widehat{S}=S$.
\end{proof}

\thmref{thm:generatorS} entails that in order to understand stable choice functions, we only need to understand the circumstances under which a single alternative is discarded.\footnote{Together with \thmref{thm:ShatdirectedMSSP}, \thmref{thm:generatorS} also implies that, for any stable tournament solution $S$, $\rootsym[S]$ is a coarsest generator of $S$. When only considering generators that satisfy $\widehat{\alpha}_{_\subseteq}$, $\rootsym[S]$ is also \emph{the} coarsest generator of $S$. In addition, since simple choice functions trivially satisfy $\widehat{\alpha}_{_\subseteq}$, the two theorems imply that $\rootsym[S]$ is the unique simple choice function generating $S$.}

An important question is which simple choice functions are \rootterms of stable choice functions. It follows from the definition of \rootterm functions that any \rootterm of a stable choice function needs to satisfy $\widehat{\alpha}$. This condition is, however, not sufficient as it is easy to construct a simple choice function $S$ that satisfies $\widehat{\alpha}$ such that $\widehat{S}$ violates $\widehat{\alpha}$. Nevertheless, the theorem implies that the number of stable choice functions can be bounded by counting the number of simple choice functions that satisfy $\widehat{\alpha}$. The number of simple choice functions for a universe of size $n\ge 2$ is only $\prod_{i=2}^n (i+1)^{\binom{n}{i}}$, compared to $\prod_{i=2}^n (2^i-1)^{\binom{n}{i}}$ for arbitrary choice functions.

In order to give a complete characterization of choice functions that generate stable choice functions, we need to introduce a new property.
A choice function $S$ satisfies local $\widehat{\alpha}$ if minimal $S$-stable sets are invariant under removing outside alternatives.\footnote{It can be checked that we obtain an equivalent condition even if we require that \emph{all} outside alternatives have to be removed. When defining local $\widehat{\alpha}$ in this way, it can be interpreted as some form of transitivity of stability: stable sets of minimally stable sets are also stable within the original feasible set \citep[cf.][Lem.~3]{Bran11b}.}

\begin{definition}
A choice function $S$ satisfies \emph{local} $\widehat{\alpha}$ if for any sets $X\subseteq Y\subseteq Z$ such that $X$ is minimally $S$-stable in $Z$, we have that $X$ is also minimally $S$-stable in $Y$.
\end{definition}

Recall that a choice function $S$ satisfies $\widehat{\alpha}_{_\subseteq}$ if for any sets $A,B$ such that $S(A)\subseteq B\subseteq A$, we have $S(B)\subseteq S(A)$. In particular, every simple choice function satisfies $\widehat{\alpha}_{_\subseteq}$. We will provide a characterization of choice functions $S$ satisfying $\widehat{\alpha}_{_\subseteq}$ such that $\widehat{S}$ is stable. First we need the following (known) lemma.

\begin{lemma}[\citealp{BrHa11a}]
\label{lemma:Shatgammahat}
Let $S$ be a choice function such that $\widehat{S}$ is well-defined. Then $\widehat{S}$ satisfies $\widehat{\gamma}$.
\end{lemma}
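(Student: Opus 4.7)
The plan is to show that for any $B$ and $C$ with $\widehat{S}(B)=\widehat{S}(C)=X$, the set $X$ is precisely the unique minimal $S$-stable set in $B\cup C$, which yields $\widehat{S}(B\cup C)=X$.

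First I would verify that $X$ is $S$-stable in $B\cup C$. Internal stability $S(X)=X$ is inherited directly from the fact that $X$ is $S$-stable in $B$ (or in $C$). For external stability, take any $a\in (B\cup C)\setminus X$. Then $a$ lies in $B\setminus X$ or in $C\setminus X$; in the first case the external stability of $X$ in $B$ gives $a\notin S(X\cup\{a\})$, and in the second case the external stability of $X$ in $C$ does the same. Hence $X$ is $S$-stable in $B\cup C$.

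Next, because $B\cup C$ is finite, any $S$-stable set contains a minimal one, so there exists a minimal $S$-stable set $Y\subseteq X$ in $B\cup C$. Well-definedness of $\widehat{S}$ forces $Y=\widehat{S}(B\cup C)$. It remains to show $Y=X$.

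The key step, which I expect to be the only real content of the argument, is to rule out $Y\subsetneq X$. Assume it. Since $Y\subseteq X\subseteq B$, I would show $Y$ is $S$-stable in $B$: internal stability is $S(Y)=Y$, already known, and for external stability within $B$ I restrict the external stability condition from $B\cup C$ to the smaller set $B\setminus Y\subseteq (B\cup C)\setminus Y$. Now $Y$ contains some minimal $S$-stable subset $Y'$ in $B$, and well-definedness of $\widehat{S}$ in $B$ forces $Y'=\widehat{S}(B)=X$. But $Y'\subseteq Y\subsetneq X$, contradiction. Therefore $Y=X$, and so $\widehat{S}(B\cup C)=X=\widehat{S}(B)$, establishing $\widehat{\gamma}$.

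The only subtle point is the restriction argument that passes $S$-stability of $Y$ from the larger ground set $B\cup C$ down to $B$; everything else is bookkeeping with the internal/external stability conditions together with the uniqueness guaranteed by well-definedness of $\widehat{S}$.
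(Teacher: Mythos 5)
Your proof is correct: establishing that $X=\widehat{S}(B)=\widehat{S}(C)$ is $S$-stable in $B\cup C$, extracting the unique minimal $S$-stable set $Y\subseteq X$ of $B\cup C$, and then transferring $Y$'s stability down to $B$ so that well-definedness forces $Y\supseteq\widehat{S}(B)=X$, hence $Y=X$. The paper itself gives no proof (the lemma is cited from \citealp{BrHa11a}), but your argument is precisely the standard one underlying that reference, so there is nothing to flag.
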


\begin{theorem}
\label{thm:ShatdirectedMSSP}
Let $S$ be a choice function satisfying $\widehat{\alpha}_{_\subseteq}$. Then $\widehat{S}$ is stable if and only if $\widehat{S}$ is well-defined and $S$ satisfies local $\widehat{\alpha}$.
\end{theorem}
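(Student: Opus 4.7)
The strategy is to prove the two directions separately, anchored by two existing results: Theorem~\ref{thm:BrHa} (stability $=$ $\widehat{\alpha}$ together with $\widehat{\gamma}$) and Lemma~\ref{lemma:Shatgammahat} (well-definedness of $\widehat{S}$ alone is enough to get $\widehat{\gamma}$ for $\widehat{S}$). The unifying observation is that whenever $\widehat{S}$ is well-defined, $\widehat{S}(A)$ \emph{is} the unique minimal $S$-stable set in $A$, so local $\widehat{\alpha}$ for $S$ is essentially a rephrasing of $\widehat{\alpha}$ for $\widehat{S}$. Most of the work will be careful bookkeeping of the two notions of ``minimal stable set''.

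For the forward direction, assume $\widehat{S}$ is stable. Well-definedness is part of the definition, so it remains to verify local $\widehat{\alpha}$. Take any $X \subseteq Y \subseteq Z$ with $X$ minimally $S$-stable in $Z$. Well-definedness of $\widehat{S}$ gives $\widehat{S}(Z)=X$, so from the chain $\widehat{S}(Z) = X \subseteq Y \subseteq Z$ I apply $\widehat{\alpha}$ to $\widehat{S}$ to obtain $\widehat{S}(Y)=X$. Since $\widehat{S}(Y)$ is by definition the unique minimal $S$-stable set in $Y$, this set $X$ is minimally $S$-stable in $Y$, as required.

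For the backward direction, assume $\widehat{S}$ is well-defined and $S$ satisfies local $\widehat{\alpha}$. By Lemma~\ref{lemma:Shatgammahat}, $\widehat{S}$ satisfies $\widehat{\gamma}$, so, via Theorem~\ref{thm:BrHa}, it suffices to verify $\widehat{\alpha}$ for $\widehat{S}$. Fix sets with $\widehat{S}(B) \subseteq C \subseteq B$ and set $X := \widehat{S}(B)$, the unique minimal $S$-stable set in $B$. Local $\widehat{\alpha}$ applied to $X \subseteq C \subseteq B$ yields that $X$ is minimally $S$-stable in $C$, and well-definedness of $\widehat{S}$ then forces $\widehat{S}(C) = X = \widehat{S}(B)$, which is exactly $\widehat{\alpha}$.

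I do not anticipate a genuinely hard step: once Lemma~\ref{lemma:Shatgammahat} and Theorem~\ref{thm:BrHa} are in hand, the whole argument reduces to matching up the phrases ``minimally $S$-stable set'' and ``value of $\widehat{S}$''. The hypothesis $\widehat{\alpha}_{_\subseteq}$ does not explicitly enter this outline; I read it as a standing assumption tailored to the intended applications (where $S$ is a simple generator, and hence trivially satisfies $\widehat{\alpha}_{_\subseteq}$), possibly used by an alternative proof route that splits $\widehat{\alpha}$ into $\widehat{\alpha}_{_\subseteq}$ and $\widehat{\alpha}_{_\supseteq}$ and handles each half using one direction of the stability/external-stability dichotomy.
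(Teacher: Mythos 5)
There is a genuine gap in your forward direction, and it is exactly where the hypothesis $\widehat{\alpha}_{_\subseteq}$ that you dismissed is needed. In this paper's framework, ``$\widehat{S}$ is stable'' does \emph{not} presuppose that $\widehat{S}$ is well-defined: $\widehat{S}(A)$ is defined as the \emph{union} of all minimal $S$-stable sets and is a choice function as soon as $S$-stable sets exist, while well-definedness additionally requires that the minimal $S$-stable set be \emph{unique} in every feasible set. So your opening move, ``well-definedness is part of the definition, so it remains to verify local $\widehat{\alpha}$,'' assumes the substantive half of the conclusion. The paper's remark immediately after the theorem shows this is not a technicality but a genuine failure mode: with $U=\{a,b,c\}$, $S(\{a,b,c\})=\{b\}$ and $S(X)=X$ otherwise, both $\{a,b\}$ and $\{b,c\}$ are minimal $S$-stable in $\{a,b,c\}$, so $\widehat{S}$ is not well-defined, yet $\widehat{S}$ is the trivial (hence stable) choice function. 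Thus the implication you outlined is false without $\widehat{\alpha}_{_\subseteq}$, and your reading of that hypothesis as ``a standing assumption tailored to the applications'' cannot be right.

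The paper's proof of the forward direction therefore does real work: assuming $\widehat{S}$ is stable, it takes a minimum-size feasible set $A$ containing two distinct minimal $S$-stable sets $B$ and $C$. If $|B\setminus C|=|C\setminus B|=1$, external stability gives $S(B\cup C)\subseteq B\cap C$ and then $\widehat{\alpha}_{_\subseteq}$ together with internal stability forces $S(B\cup C)=B=C$, a contradiction; otherwise, with $x,y\in C\setminus B$ distinct, minimality of $A$ gives unique minimal $S$-stable sets in $A\setminus\{x\}$ and $A\setminus\{y\}$, and $\widehat{\alpha}$ and $\widehat{\gamma}$ of $\widehat{S}$ yield $\widehat{S}(A)=\widehat{S}(A\setminus\{x,y\})$, contradicting $x\in C\subseteq\widehat{S}(A)$. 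Only after this is established does the argument you gave for local $\widehat{\alpha}$ (and your backward direction, which matches the paper) go through; those parts of your proposal are fine, but without the uniqueness argument the proof is incomplete.
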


\begin{proof}
For the direction from right to left, suppose that $\widehat{S}$ is well-defined and $S$ satisfies local $\widehat{\alpha}$. Then Lemma \ref{lemma:Shatgammahat} implies that $\widehat{S}$ satisfies $\widehat{\gamma}$. Moreover, it follows directly from local $\widehat{\alpha}$ and the fact that $\widehat{S}$ is well-defined that $\widehat{S}$ satisfies $\widehat{\alpha}$. Hence, $\widehat{S}$ is stable.

For the converse direction, suppose that $\widehat{S}$ is stable. We first show that $\widehat{S}$ is well-defined. 
Every feasible set $A$ contains at least one $S$-stable set because otherwise $\widehat{S}$ is not a choice function.
Next, suppose for contradiction that there exists a feasible set that contains two distinct minimal $S$-stable sets. Consider such a feasible set $A$ of minimum size, and pick any two distinct minimal $S$-stable sets in $A$, which we denote by $B$ and $C$. If $|B\backslash C|=|C\backslash B|=1$, then $\widehat{\alpha}_{_\subseteq}$ implies $S(B\cup C)=B=C$, a contradiction. Otherwise, assume without loss of generality that $|C\backslash B|\geq 2$, and pick $x,y\in C\backslash B$ with $x\neq y$. Then $A\backslash\{x\}$ contains a unique minimal $S$-stable set. As $B$ is also $S$-stable in $A\backslash\{x\}$, it follows that $\widehat{S}(A\backslash\{x\})\subseteq B$. Since $\widehat{S}$ satisfies $\widehat{\alpha}$, we have $\widehat{S}(A\backslash\{x\})=\widehat{S}(A\backslash\{x,y\})$. Similarly, we have $\widehat{S}(A\backslash\{y\})=\widehat{S}(A\backslash\{x,y\})$. But then $\widehat{\gamma}$ implies that $\widehat{S}(A)=\widehat{S}(A\backslash\{x,y\})\subseteq A$, which contradicts the fact that $C$ is minimal $S$-stable in $A$.

We now show that $S$ satisfies local $\widehat{\alpha}$. Since $\widehat{S}$ is well-defined, minimal $S$-stable sets are unique and given by $\widehat{S}$. Since $\widehat{S}$ satisfies $\widehat{\alpha}$, minimal $S$-stable sets are invariant under deleting outside alternatives. Hence, $S$ satisfies local $\widehat{\alpha}$, as desired.
\end{proof}

\begin{remark}
Theorem \ref{thm:ShatdirectedMSSP} does not hold without the condition that $S$ satisfies $\widehat{\alpha}_{_\subseteq}$. To this end, let $U=\{a,b,c\}$, $S(\{a,b,c\})=\{b\}$, and $S(X)=X$ for all other feasible sets $X$. Then both $\{a,b\}$ and $\{b,c\}$ are minimally $S$-stable in $\{a,b,c\}$, implying that $\widehat{S}$ is not well-defined. On the other hand, $\widehat{S}$ is trivial and therefore also stable. This example also shows that a generator of a stable choice function needs not be sandwiched between the choice function and its root.
\end{remark}

Combining Theorem \ref{thm:ShatdirectedMSSP} with \thmref{thm:BrHa}, we obtain the following characterization.

\begin{corollary}
Let $S$ be a choice function satisfying $\widehat{\alpha}_{_\subseteq}$. Then, 
\begin{align*}
\widehat{S} \text{ is stable}  &\text{ if and only if } \widehat{\widehat{S}} \text{ is well-defined and } \widehat{\widehat{S}}=\widehat{S} \\
&\text{ if and only if } \widehat{S} \text{ satisfies } \widehat{\alpha} \text{ and } \widehat{\gamma} \\
&\text{ if and only if } \widehat{S} \text{ is well-defined and } S \text{ satisfies local $\widehat{\alpha}$}.
\end{align*}
\end{corollary}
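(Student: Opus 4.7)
The plan is to recognize that the corollary is not really a new result but rather a clean assembly of three equivalences that have already been established, namely the definition of stability together with \thmref{thm:BrHa} and \thmref{thm:ShatdirectedMSSP}. Since $S$ is assumed to satisfy $\widehat{\alpha}_{_\subseteq}$, the hypothesis of \thmref{thm:ShatdirectedMSSP} is met, so no additional work is needed beyond verifying that each equivalence reduces to one of the three building blocks.

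First I would argue that the equivalence
\[
\widehat{S} \text{ is stable} \iff \widehat{\widehat{S}} \text{ is well-defined and } \widehat{\widehat{S}} = \widehat{S}
\]
is nothing but the definition of stability applied to the choice function $\widehat{S}$. The only mild subtlety is that, for $\widehat{S}$ to be called stable or for $\widehat{\widehat{S}}$ to be defined, $\widehat{S}$ itself must be a well-defined choice function; both directions of the equivalence automatically entail this. Second, once $\widehat{S}$ is a well-defined choice function, \thmref{thm:BrHa} applied to it immediately gives
\[
\widehat{S} \text{ is stable} \iff \widehat{S} \text{ satisfies } \widehat{\alpha} \text{ and } \widehat{\gamma}.
\]
Third, the equivalence
\[
\widehat{S} \text{ is stable} \iff \widehat{S} \text{ is well-defined and } S \text{ satisfies local } \widehat{\alpha}
\]
is the content of \thmref{thm:ShatdirectedMSSP}, whose hypothesis ($S$ satisfies $\widehat{\alpha}_{_\subseteq}$) is precisely the standing assumption of the corollary.

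Chaining the three equivalences through the common pivot ``$\widehat{S}$ is stable'' yields the corollary. There is no real obstacle: the only thing worth flagging in the write-up is that each alternative characterization implicitly carries well-definedness of $\widehat{S}$ with it, so no side case analysis is required.
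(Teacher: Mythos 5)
Your proposal is correct and matches the paper's own (implicit) argument: the corollary is obtained exactly by combining the definition of stability applied to $\widehat{S}$, \thmref{thm:BrHa}, and \thmref{thm:ShatdirectedMSSP}, with the standing hypothesis $\widehat{\alpha}_{_\subseteq}$ invoked only for the last equivalence. Your remark that well-definedness of $\widehat{S}$ is carried along by each characterization is a sensible point of care but requires no further work.
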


Since simple choice functions trivially satisfy $\widehat{\alpha}_{_\subseteq}$, this corollary completely characterizes which simple choice functions generate stable choice functions.

\section{Tournament Solutions}
\label{sec:tsolutions}

We now turn to the important special case of choice functions whose output depends on a binary relation.

\subsection{Preliminaries}
\label{sec:prelims}

A \emph{tournament $T$} is a pair $(A,{\succ})$, where $A$ is a feasible set and~$\succ$ is a connex and asymmetric (and thus irreflexive) binary relation on $A$, usually referred to as the \emph{dominance relation}. 
	Intuitively, $a\succ b$ signifies that alternative~$a$ is preferable to alternative~$b$. The dominance relation can be extended to sets of alternatives by writing $X\succ Y$ when $a\succ b$ for all $a\in X$ and $b\in Y$. 

	For a tournament $T=(A,\succ)$ and an alternative $a\in A$, 
	we denote by \[\dom(a)=\{\,x\in A\mid x \succ a\,\}\] the \emph{dominators} of~$a$
        and by \[D(a)=\{\,x\in A\mid a \succ x\,\}\] the \emph{dominion} of~$a$. When varying the tournament, we will refer to $\dom_{T'}(a)$ and $D_{T'}(a)$ for some tournament $T'=(A',\succ')$.
        An alternative $a$ is said to \emph{cover} another alternative $b$ if $D(b)\subseteq D(a)$. 
        It is said to be a \emph{Condorcet winner} if it dominates all other alternatives, and a \emph{Condorcet loser} if it is dominated by all other alternatives. 
		%The \textit{Copeland score} of an alternative is the number of alternatives that it dominates, \ie the size of its dominion.
	The order of a tournament $T=(A,\succ)$ is denoted by $|T|=|A|$. A tournament is \emph{regular} if the dominator set and the dominion set of each alternative are of the same size, \ie for all $a\in A$ we have $|D(a)|=|\dom(a)|$. It is easily seen that regular tournaments are always of odd order.
	
	A \emph{tournament solution} is a function that maps a tournament to a nonempty subset of its alternatives.
	We assume that tournament solutions are invariant under tournament isomorphisms. For every fixed tournament, a tournament solution yields a choice function. A tournament solution is \emph{trivial} if it returns all alternatives of every tournament.
	
	Three common tournament solutions are the top cycle, the uncovered set, and the Banks set. For a given tournament $(A,{\succ})$, the \emph{top cycle} (\tc) is the (unique) smallest set $B\subseteq A$ such that $B\succ A\setminus B$, the \emph{uncovered set} (\uc) contains all alternatives that are not covered by another alternative, and the \emph{Banks set} (\ba) contains all alternatives that are Condorcet winners of inclusion-maximal transitive subtournaments. 
	
	For two tournament solutions $S$ and $S'$, we write $S'\subseteq S$, and say that $S'$ is a \emph{refinement} of~$S$ and~$S$ a \emph{coarsening} of~$S'$, if $S'(T)\subseteq S(T)$ for all tournaments~$T$. The following inclusions are well-known:
	\[
	\ba \subseteq \uc \subseteq \tc\text.
	\]
	
	To simplify notation, we will often identify a (sub)tournament by its set of alternatives when the dominance relation is clear from the context. For example, for a tournament solution $S$ and a subset of alternatives $X\subseteq A$ in a tournament $T=(A,\succ)$ we will write $S(X)$ for $S(T|_{X})$. 

The definitions of stability and other choice consistency conditions directly carry over from choice functions to tournament solutions by requiring that the given condition should be satisfied by every choice function induced by the tournament solution and a tournament.
We additionally consider the following desirable properties of tournament solutions, all of which are standard conditions in the literature.

Monotonicity requires that a chosen alternative will still be chosen when its dominion is enlarged, while leaving everything else unchanged.

	\begin{definition}
	A tournament solution is \emph{monotonic} if for all $T=(A,{\succ})$, $T'=(A,{\succ'})$, $a \in A$ such that ${\succ}_{A\setminus\{a\}} = {\succ'}_{A\setminus\{a\}}$ and for all $b\in A\setminus \{a\}$, $a\succ' b$ whenever $a \succ b$, \[a\in S(T) \quad\text{implies}\quad a\in S(T')\text.\]
        \end{definition}

Regularity requires that all alternatives are chosen from regular tournaments.

	\begin{definition}
	 A tournament solution is \emph{regular} if $S(T)=A$ for all regular tournaments $T=(A,\succ)$.
        \end{definition}
Even though regularity is often considered in the context of tournament solutions, it does not possess the normative appeal of other conditions. 
		
Finally, we consider a structural invariance property that is based on components of similar alternatives and, loosely speaking, requires that a tournament solution chooses the ``best'' alternatives from the ``best'' components.
A \emph{component} is a nonempty subset of alternatives $B\subseteq A$ that bear the same relationship to any alternative not in the set, i.e., for all $a\in A\backslash B$, either $B\succ\{a\}$ or $\{a\}\succ B$. A \emph{decomposition} of $T$ is a partition of $A$ into components.

For a given tournament $\tilde{T}$, a new tournament $T$ can be constructed by replacing each alternative with a component. Let $B_1,\dots,B_k$ be pairwise disjoint sets of alternatives and consider tournaments $T_1=(B_1,\succ_1),\dots,T_k=(B_k,\succ_k)$, and $\tilde{T} = (\{1,\dots,k\}, \tilde{\succ})$. The \emph{product} of $T_1,\dots,T_k$ with respect to $\tilde{T}$, denoted by $\prod(\tilde{T},T_1,\dots,T_k)$, is the tournament $T=(A,\succ)$ such that $A=\bigcup_{i=1}^kB_i$ and for all
$b_1\in B_i,b_2\in B_j$,
\[b_1 \succ b_2 \text{ \hspace{0.1cm}  if and only if  \hspace{0.1cm} } i = j \text{ and } b_1\succ_i b_2, \text{ or } i \neq j \text{ and } i \mathrel{\tilde{\succ}} j.\]
Here, $\tilde{T}$ is called the \emph{summary} of T with respect to the above decomposition.

	\begin{definition}
	A tournament solution is \emph{composition-consistent} if for all tournaments $T,T_1,\dots,T_k$ and $\tilde{T}$ such that $T=\prod(\tilde{T},T_1,\dots,T_k)$,
\[S(T)=\bigcup_{i\in S(\tilde{T})}S(T_i).\]
        \end{definition}

All of the three tournament solutions we briefly introduced above satisfy monotonicity. \tc and \uc are regular, \uc and \ba are composition-consistent, and only \tc is stable.
	For more thorough treatments of tournament solutions, see \citet{Lasl97a} and \citet{BBH15a}.

\subsection{The Bipartisan set and the Tournament Equilibrium Set}
\label{sec:bpandteq}

We now define two tournament solutions that are central to this paper. The first one, the bipartisan set, generalizes the notion of a Condorcet winner to probability distributions over alternatives.  
The \emph{skew-adjacency matrix} $G(T)=(g_{ab})_{a,b\in A}$ of a tournament $T$ is defined by letting
\[
g_{ab} = \begin{cases}
1 & \text{if $a\succ b$}\\
-1 & \text{if $b\succ a$}\\
0 & \text{if $a=b$.}
\end{cases}
\]
The skew-adjacency matrix can be interpreted as a symmetric zero-sum game in which there are two players, one choosing rows and the other choosing columns, and in which the matrix entries are the payoffs of the row player. \citet{LLL93b} and \citet{FiRy95a} have shown independently that every such game admits a unique mixed Nash equilibrium, which moreover is symmetric. Let $p_T\in \Delta(A)$ denote the mixed strategy played by both players in equilibrium. Then, $p_T$ is the unique probability distribution such that
\[
	\sum_{a,b\in A} p_T(a)q(b)g_{ab}\ge 0 \quad\text{ for all }q\in\Delta(A)\text{.}
\] 
In other words, there is no other probability distribution that is more likely to yield a better alternative than $p_T$.
\citet{LLL93b} defined the bipartisan set~$\bp(T)$ of~$T$ as the support of $p_T$.\footnote{The probability distribution $p_T$ was independently analyzed by \citet{Krew65a}, \citet{Fish84a}, \citet{FeMa92a}, and others. An axiomatic characterization in the context of social choice was recently given by \citet{Bran13a}.}
\begin{definition}
The \emph{bipartisan set} ($\bp$) of a given tournament $T=(A,\succ)$ is defined as
\[ \bp(T) = \{ a\in A \mid p_T(a)>0 \}\text{.}\]
\end{definition}
\bp satisfies stability, monotonicity, regularity, and composition-consistency. Moreover, $\bp\subseteq \uc$ and $\bp$ can be computed in polynomial time by solving a linear feasibility problem.

The next tournament solution, the tournament equilibrium set, was defined by \citet{Schw90a}.
Given a tournament $T=(A,\succ)$ and a tournament solution $S$, a nonempty subset of alternatives $X\subseteq A$ is called $S$-\emph{retentive} if $S(\dom(x))\subseteq X$ for all $x \in X$ such that $\dom(x)\neq \emptyset$.

\begin{definition}
 The \emph{tournament equilibrium set} ($\teq$) of a given tournament $T=(A,\succ)$ is defined recursively as the union of all inclusion-minimal $\teq$-retentive sets in $T$.
\end{definition}

This is a proper recursive definition because the cardinality of the set of dominators of an alternative in a particular set is always smaller than the cardinality of the set itself. \bp and \teq coincide on all tournaments of order 5 or less \citep{BDS13a}.\footnote{It is open whether there are tournaments in which \bp and \teq are disjoint.}

\citet{Schw90a} showed that $\teq\subseteq \ba$ and conjectured that every tournament contains a \emph{unique} inclusion-minimal $\teq$-retentive set, which was later shown to be equivalent to $\teq$ satisfying any one of a number of desirable properties for tournament solutions including stability and monotonicity \citep{LLL93a,Houy09a,Houy09b,BBFH11a,Bran11b,BrHa11a,Bran11c}.
This conjecture was disproved by \citet{BCK+11a}, who have non-constructively shown the existence of a counterexample with about $10^{136}$ alternatives using the probabilistic method. Since it was shown that $\teq$ satisfies the above mentioned desirable properties for all tournaments that are smaller than the smallest counterexample to Schwartz's conjecture, the search for smaller counterexamples remains an important problem. In fact, the counterexample found by \citet{BCK+11a} is so large that it has no practical consequences whatsoever for $\teq$. Apart from concrete counterexamples, there is ongoing interest in why and under which circumstances $\teq$ and a related tournament solution called the \emph{minimal extending set} $\me=\widehat{\ba}$ violate stability \citep{MSY15a,BHS15a,Yang16a}.

Computing the tournament equilibrium set of a given tournament was shown to be NP-hard and consequently there does not exist an efficient algorithm for this problem unless P equals NP \citep{BFHM09a}.

\section{Stable Tournament Solutions and Their Generators}

Tournament solutions comprise an important subclass of choice functions. In this section, we examine the consequences of the findings from Sections~\ref{sec:stability} and \ref{sec:generators}, in particular Theorems~\ref{thm:BrHa}, \ref{thm:generatorS}, and \ref{thm:ShatdirectedMSSP}, for tournament solutions.

Stability is a rather demanding property which is satisfied by only a few tournament solutions.
Three well-known tournament solutions that satisfy stability are the top cycle \tc, the minimal covering set \mc defined by $\mc=\widehat{\uc}$, and the bipartisan set \bp, which is a refinement of \mc.
By virtue of \thmref{thm:generatorS}, any stable tournament solution is generated by its \rootterm $\rootsym[S]$. 
For example, $\rootsym[\tc]$ is a tournament solution that excludes an alternative if and only if it is the only alternative not contained in the top cycle (and hence a Condorcet loser). Similarly, one can obtain the \rootterms of other stable tournament solutions such as \mc and \bp. In some cases, the generator that is typically considered for a stable tournament solution is different from its \rootterm; for example, $\mc$ is traditionally generated by \uc, a refinement of $\rootsym[\mc]$.
Since tournament solutions are invariant under tournament isomorphisms, a simple tournament solution may only exclude 
%alternatives $a$ such that any automorphism of $T$ maps $a$ to itself. 
an alternative $a$ if any automorphism of $T$ maps $a$ to itself.
Note that if a tournament solution $S$ is stable, $\rootsym[S]$ is different from $S$ unless $S$ is the trivial tournament solution. 

It follows from \thmref{thm:BrHa} that stable tournament solutions satisfy both $\widehat{\alpha}$ and $\widehat{\gamma}$. 
It can be shown that $\widehat{\alpha}$ and $\widehat{\gamma}$ are independent from each other even in the context of tournament solutions.

\begin{remark}\label{rem:alphagamma}
There are tournament solutions that satisfy only one of $\widehat{\alpha}$ and $\widehat{\gamma}$. Examples are given in Appendix~\ref{app:alphagamma}.
\end{remark}

We have shown in \thmref{thm:generatorS} that stable tournament solutions are generated by unique simple tournament solutions.
If we furthermore restrict our attention to \emph{monotonic} stable tournament solutions, the following theorem shows that we only need to consider \rootterm solutions that are monotonic.

%\todo{FB: Does the statement below hold more generally (\eg for all tournament solutions that satisfy WSP)?}

\begin{theorem}
\label{thm:mon}
A stable tournament solution $S$ is monotonic if and only if $\rootsym[S]$ is monotonic.
\end{theorem}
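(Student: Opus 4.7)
The plan is to leverage Theorem~\ref{thm:generatorS}, which identifies $S$ with $\widehat{\rootsym[S]}$, so that every statement about $S$-choices translates into a statement about minimal $\rootsym[S]$-stable sets. I would treat the two implications separately, reducing the ``if'' direction to a contrapositive stable-set argument and the ``only if'' direction to a short case analysis driven by the definition of $\rootsym[S]$.

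For the ``if'' direction, assume $\rootsym[S]$ is monotonic and, aiming at a contrapositive, suppose $a \in S(T)$ but $a \notin S(T')$, where $T'$ is obtained from $T$ by reinforcing $a$. Let $X = S(T')$, the unique minimal $\rootsym[S]$-stable set in $T'$; since $a \notin X$, the subtournaments on $X$ and on $X \cup \{b\}$ for any $b \neq a$ coincide in $T$ and $T'$. This immediately transfers internal stability of $X$, as well as external stability at every $b \in A \setminus (X \cup \{a\})$, from $T'$ to $T$. The only nontrivial case is external stability at $a$: here $T|_{X \cup \{a\}}$ is obtained from $T'|_{X \cup \{a\}}$ by \emph{weakening} $a$, so the contrapositive of $\rootsym[S]$-monotonicity yields $a \notin \rootsym[S](X \cup \{a\})$ in $T$. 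Thus $X$ is $\rootsym[S]$-stable in $T$; since every $\rootsym[S]$-stable set contains the minimal one, $S(T) \subseteq X$, contradicting $a \in S(T)$.

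For the ``only if'' direction, assume $S$ is monotonic, take $a \in \rootsym[S](T)$, and let $T'$ reinforce $a$. By definition, $\rootsym[S](T')$ equals either $T'$ (when $|S(T')| \neq |T'| - 1$) or $S(T')$ (when $|S(T')| = |T'| - 1$). The first subcase is immediate, so the focus is on the second, where I must show $a \in S(T')$. If $a \in S(T)$ already, monotonicity of $S$ closes the argument. Otherwise $a \notin S(T)$, which together with $a \in \rootsym[S](T)$ forces $\rootsym[S](T) = T$ and hence $|S(T)| \leq |T| - 2$. Here I would invoke the reverse implication of $S$-monotonicity: since going from $T'$ back to $T$ reinforces every alternative $b \neq a$, any such $b \in S(T')$ must also lie in $S(T)$, so $S(T') \setminus \{a\} \subseteq S(T)$. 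The cardinality estimate $|S(T')| = |T'| - 1 > |T| - 2 \geq |S(T)|$ then forces $a \in S(T')$.

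The main technical observation underpinning both directions is that monotonicity of a tournament solution automatically yields a reverse implication for each non-focal alternative: reinforcing $a$ weakens every other $b$ relative to $a$, and applying monotonicity to $b$ in the reversed tournament gives the backward direction. This single observation fuels both the cardinality count in the ``only if'' direction and the external-stability step at $a$ in the ``if'' direction; once it is in place, the remainder of the argument is bookkeeping that splits naturally on whether $a$ is the distinguished alternative possibly excluded by $\rootsym[S]$.
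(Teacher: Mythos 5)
Your ``if'' direction ($\rootsym[S]$ monotonic implies $S$ monotonic) is correct: transferring internal and external $\rootsym[S]$-stability of $X=S(T')$ from $T'$ to $T$ only uses that the two tournaments differ on edges incident to $a$, the step at $a$ is a legitimate contrapositive of monotonicity \emph{with $a$ itself as the distinguished alternative}, and the containment of the unique minimal $\rootsym[S]$-stable set $S(T)$ in every $\rootsym[S]$-stable set is available from Theorem~\ref{thm:generatorS}. The paper handles this direction by citing the known fact that monotonicity of \emph{any} generator carries over to the generated stable solution, so your self-contained argument is a reasonable alternative there.

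The genuine gap is in the ``only if'' direction, at the claim $S(T')\setminus\{a\}\subseteq S(T)$ ``by the reverse implication of $S$-monotonicity''. Monotonicity only constrains an alternative $b$ when the two tournaments agree on $A\setminus\{b\}$; passing from $T'$ back to $T$ changes edges incident to $a$, and for $b\neq a$ these are (apart from possibly the single edge between $a$ and $b$) not incident to $b$, so monotonicity with $b$ as the distinguished alternative simply does not apply, and it says nothing at all about third alternatives. The inclusion is in fact false for stable, monotonic solutions, even in your exact subcase: let $S=\tc$, $A=\{a,b,c,d,e\}$, with $b\succ a,c,d,e$, $d\succ a$, $e\succ a$, $a\succ c$, $d\succ e$, $e\succ c$, $d\succ c$ in $T$; then $\tc(T)=\{b\}$, $a\notin\tc(T)$, $\rootsym[\tc](T)=A$, and flipping the single edge between $a$ and $b$ in $a$'s favor gives $T'$ with $\tc(T')=\{a,b,d,e\}$, so $|\tc(T')|=|T'|-1$ yet $\tc(T')\setminus\{a\}\not\subseteq\tc(T)$. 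Hence your cardinality count has no basis, and the closing ``main technical observation'' about a reverse implication for every non-focal alternative should be dropped (your ``if'' direction never actually uses it). The subcase can be repaired without monotonicity: if $a\notin S(T')$ then $S(T')=A\setminus\{a\}$, so $\widehat\alpha$ gives $S(T'|_{A\setminus\{a\}})=A\setminus\{a\}$; since $T|_{A\setminus\{a\}}=T'|_{A\setminus\{a\}}$ and $a\notin S(T)$, $\widehat\alpha$ applied in $T$ yields $S(T)=S(T|_{A\setminus\{a\}})=A\setminus\{a\}$, contradicting $|S(T)|\leq|T|-2$. This is essentially the paper's route for this direction, stated contrapositively: if $S(T)=A\setminus\{a\}$ and $a$ is weakened, monotonicity keeps $a$ unchosen and $\widehat\alpha$ pins $S(T')=A\setminus\{a\}$.
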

\begin{proof}
First, note that monotonicity is equivalent to requiring that unchosen alternatives remain unchosen when they are weakened.
Now, for the direction from left to right, suppose that $S$ is monotonic. Let $T=(A,{\succ})$, $B=\rootsym[S](T)$, and $a\in A\setminus B$. Since $\rootsym[S]$ is simple, we have $\rootsym[S](T)=B\backslash\{a\}$, and therefore $S(T)=B\backslash\{a\}$. Using the fact that $S$ is stable and thus satisfies $\widehat{\alpha}$, we find that $S(T|_{B\backslash\{a\}})=B\backslash\{a\}$. Let $T'$ be a tournament obtained by weakening $a$ with respect to some alternative in $B$. Monotonicity of $S$ entails that $a\not\in S(T')$. Since $T|_{B\backslash\{a\}}=T'|_{B\backslash\{a\}}$, we have $S(T'|_{B\backslash\{a\}})=B\backslash\{a\}$, and $\widehat{\alpha}$ implies that $S(T')=B\backslash\{a\}$ and $\rootsym[S](T')=B\backslash\{a\}$ as well. This means that $a$ remains unchosen by $\rootsym[S]$ in $T'$, as desired.

The converse direction even holds for all generators of $S$ \citep[see][Prop.~5]{Bran11b}.
\end{proof}

Analogous results do \emph{not} hold for composition-consistency or regularity.

Theorem~\ref{thm:ShatdirectedMSSP} characterizes stable choice functions $\widehat{S}$ using well-definedness of $\widehat{S}$ and local $\widehat{\alpha}$ of $S$. These two properties are independent from each other (and therefore necessary for the characterization) even in the context of tournament solutions. 
%Moreover, even though tournament solutions that are not stable often exhibit tournaments that contain two \emph{disjoint} minimal stable sets, this does not hold in general.

\begin{remark}\label{rem:localalpha}
There is a tournament solution $S$ that satisfies local $\widehat{\alpha}$, but $\widehat{S}$ violates~$\widehat{\alpha}$. There is a tournament solution $S$ for which $\widehat{S}$ is well-defined, but $\widehat{S}$ is not stable. 
%Moreover, there is a simple tournament solution $S$ satisfying local $\widehat{\alpha}$ for which some tournament contains two distinct minimal $S$-stable sets, but no tournament contains two disjoint $S$-stable sets.
Examples are given in Appendix~\ref{app:localalpha}.
\end{remark}

\thmref{thm:ShatdirectedMSSP} generalizes previous statements about stable tournament solutions.
\citet{Bran11b} studied a particular class of generators defined via qualified subsets and shows the direction from right to left of \thmref{thm:ShatdirectedMSSP} for these generators \citep[][Thm.~4]{Bran11b}.\footnote{\citeauthor{Bran11b}'s proof relies on a lemma that essentially showed that the generators he considers always satisfy local $\widehat{\alpha}$.}
Later, \citet{BHS15a} proved \thmref{thm:ShatdirectedMSSP} for one particular generator \ba \citep[][Cor.~2]{BHS15a}.

\section{Local Reversal Symmetry}

In this section, we introduce a new property of tournament solutions called local reversal symmetry (\lrs).\footnote{The name of this axiom is inspired by a social choice criterion called \emph{reversal symmetry}. Reversal symmetry prescribes that a uniquely chosen alternative has to be unchosen when the preferences of all voters are reversed \citep{SaBa03a}. A stronger axiom, called \emph{ballot reversal symmetry}, which demands that the choice set is inverted when all preferences are reversed was recently introduced by \citet{DHLP+14a}.}
While intuitive by itself, \lrs is strongly connected to stability and can be leveraged to disprove that $\teq$ is stable and to prove that no refinement of $\bp$ is stable.

% On the other hand, we exhibit a tournament solution different from \bp that simultaneously satisfies \lrs, monotonicity, regularity, and Condorcet consistency. We also show that a stable, monotonic tournament solution satisfying \lrs has the same distribution of the choice set size as \bp.

For a tournament $T$, let $T^a$ be the tournament whose dominance relation is \emph{locally reversed} at alternative $a$, \ie $T^a=(A,\succ^a)$ with
\[
	i \succ^a j \quad \text{if and only if} \quad 
				(i \succ j \text{ and } a \notin \{i,j\}) \text{ or }
				(j \succ i \text{ and } a \in \{i,j\}).
\]
The effect of local reversals is illustrated in \figref{fig:lrs-def}. Note that $T=\left(T^a\right)^a$ and $\left(T^a\right)^b=\left(T^b\right)^a$ for all alternatives $a$ and $b$.
	
	\begin{figure}[htb]
		\centering
		\begin{tikzpicture}[]
			\node (a) at (0,0) {$a$};
			\node (b) [right of=a] {$b$};
			\node (c) [below of=b] {$c$};
			\node (d) [left of=c] {$d$};
			\foreach \x / \y in {a/b,a/c,b/c,b/d,c/d,d/a}{
			\draw[-latex] (\x) to (\y);
			};
			\node (caption) [node distance = 0.5*\nd,below of=d, xshift=0.5*\nd] {$T$};
		\end{tikzpicture}
		\qquad\qquad
		\begin{tikzpicture}[]
			\node (a) at (0,0) {$a$};
			\node (b) [right of=a] {$b$};
			\node (c) [below of=b] {$c$};
			\node (d) [left of=c] {$d$};
			\foreach \x / \y in {b/a,c/a,b/c,b/d,c/d,a/d}{
			\draw[-latex] (\x) to (\y);
			};
			\node (caption) [node distance = 0.5*\nd,below of=d, xshift=0.5*\nd] {$T^a$};
		\end{tikzpicture}
		\caption{Local reversals of tournament $T$ at alternative $a$ result in $T^a$. $\bp(T)=\teq(T)=\{a,b,d\}$ and $\bp(T^a)=\teq(T^a)=\{b\}$.}
		\label{fig:lrs-def}
	\end{figure}
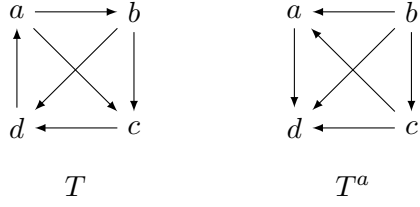

\begin{definition}
A tournament solution $S$ satisfies \emph{local reversal symmetry} (\lrs) if for all tournaments $T$ and alternatives $a$,
\[
			a\in S(T) \text{ if and only if } a \notin S(T^a).
\]
\end{definition}

\lrs can be naturally split into two properties, \lrsin and \lrsout. \lrsin corresponds to the direction from right to left in the above equivalence and requires that unchosen alternatives are chosen when all incident edges are reversed. \lrsout corresponds to the direction from left to right and requires that chosen alternatives are not chosen when all incident edges are reversed.
	
It follows directly from the definition that \lrsin (resp.~\lrsout) of a tournament solution $S$ carries over to any tournament solution that is a coarsening (resp.~refinement) of $S$.
	
\begin{lemma}\label{lem:lrs-inheritance}
	Let $S$ and $S'$ be two tournament solutions such that $S\subseteq S'$. If $S$ satisfies \lrsin, then so does $S'$. Conversely, if $S'$ satisfies \lrsout, then so does $S$.
\end{lemma}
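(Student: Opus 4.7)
The plan is to unfold the definitions of \lrsin and \lrsout and then exploit the inclusion $S\subseteq S'$ twice in each direction. Recall that \lrsin says $a\notin S(T^a)$ implies $a\in S(T)$, while \lrsout says $a\in S(T)$ implies $a\notin S(T^a)$. Because $S\subseteq S'$ means $S(T)\subseteq S'(T)$ for every tournament $T$, membership propagates upward from $S$ to $S'$, and non-membership propagates downward from $S'$ to $S$. This asymmetry is precisely what makes \lrsin inherit upward to coarsenings and \lrsout inherit downward to refinements.

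For the first claim, I would start with an alternative $a$ with $a\notin S'(T^a)$, aiming to show $a\in S'(T)$. The inclusion $S(T^a)\subseteq S'(T^a)$ gives $a\notin S(T^a)$, so \lrsin of $S$ yields $a\in S(T)$, and another application of $S(T)\subseteq S'(T)$ delivers $a\in S'(T)$. This is the direction right-to-left in the definition of \lrs, so \lrsin holds for $S'$.

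For the second claim, I would start with $a\in S(T)$ and aim to show $a\notin S(T^a)$. Here the inclusion $S(T)\subseteq S'(T)$ first lifts the hypothesis to $a\in S'(T)$; then \lrsout of $S'$ gives $a\notin S'(T^a)$; finally the inclusion $S(T^a)\subseteq S'(T^a)$ lowers this to $a\notin S(T^a)$. There is no real obstacle: the whole argument is essentially a two-line chase through the definitions, and the only thing worth emphasizing is that the ``flavor'' of the inheritance (up versus down) is dictated purely by whether the defining implication concludes with $a\in S(\cdot)$ or $a\notin S(\cdot)$.
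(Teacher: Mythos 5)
Your argument is correct, and it is exactly the direct definitional chase the paper has in mind: the paper states this lemma without proof, noting that it ``follows directly from the definition,'' and your two chains of inclusions make that explicit. No gaps.
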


There is an unexpected strong relationship between the purely choice-theoretic condition of stability and \lrs.
	
\begin{theorem}\label{thm:selfstable-lrsin}
	Every stable tournament solution satisfies \lrsin. 
\end{theorem}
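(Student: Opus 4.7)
My plan is to prove the theorem by strong induction on $|A|$. The base case $|A|=1$ is vacuous, since $T=T^a$ and $a$ is always chosen.

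For the inductive step, fix $T=(A,\succ)$ and $a\in A$ with $a\notin S(T)$, and assume towards contradiction that $a\notin S(T^a)$ as well; write $B=S(T)$ and $B'=S(T^a)$. The first step is to show $B=B'$. Using $\alphahat$ we have $S(T|_{B\cup\{x\}})=B$ for each $x\in A\setminus B$, and an iterated application of $\gammahat$ upgrades this to $S(T|_{B\cup X})=B$ for every $X\subseteq A\setminus B$; in particular, $S(T|_{B\cup B'})=B$. The symmetric argument in $T^a$ yields $S(T^a|_{B\cup B'})=B'$. Because $a\notin B\cup B'$, the subtournaments $T|_{B\cup B'}$ and $T^a|_{B\cup B'}$ coincide, and the two evaluations of $S$ must agree, forcing $B=B'$.

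The main case is $B\cup\{a\}\subsetneq A$: here $T':=T|_{B\cup\{a\}}$ is a strictly smaller tournament. Two applications of $\alphahat$ (one in $T$ and one in $T^a$) give $S(T')=B$ and $S(T'^a)=S(T^a|_{B\cup\{a\}})=B$. Since $a\notin B$, the inductive hypothesis applied to $T'$ produces $a\in S(T'^a)=B$, a contradiction.

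The remaining case, $B=A\setminus\{a\}$, is the main obstacle, because $T|_{B\cup\{a\}}$ equals $T$ itself and the reduction above does not yield a smaller object. I would handle it as follows. For each $b\in B$ put $Y_b:=S(T|_{A\setminus\{b\}})$. An application of $\gammahat$ in $T$ shows that $Y_b=Y_c$ for distinct $b,c\in B$ would force $S(T)=Y_b$; but this contradicts $b\in B=S(T)$ and $b\notin Y_b\subseteq A\setminus\{b\}$, so the $Y_b$ are pairwise distinct. Moreover, $Y_b$ cannot be $S$-stable in $T$: otherwise, by uniqueness of the minimal $S$-stable set of $T$, we would need $B\subseteq Y_b$, again contradicting $b\notin Y_b$. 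Hence the external stability of $Y_b$ in $T$ fails precisely at the single element $b$, and so $b\in S(T|_{Y_b\cup\{b\}})$ for every $b\in B$. The concluding step---the most delicate part of the argument---combines these identities with the inductive hypothesis applied to each strictly smaller sub-tournament $T|_{Y_b\cup\{a,b\}}$ and its local reversal at $a$, and uses $\gammahat$ in $T^a$ on a carefully chosen pair of subsets to force some $b\in B$ out of $S(T^a)$, thereby contradicting $S(T^a)=B=A\setminus\{a\}$.
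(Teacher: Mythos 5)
Your induction is set up cleanly, and the first two steps are correct: the argument that $S(T)=S(T^a)$ (via $\alphahat$, iterated $\gammahat$, and the fact that $T$ and $T^a$ agree on $B\cup B'$ because $a\notin B\cup B'$), and the reduction in the case $B\cup\{a\}\subsetneq A$ to the strictly smaller tournament $T|_{B\cup\{a\}}$, both go through. But the proof is not complete: the case $S(T)=S(T^a)=A\setminus\{a\}$ --- which is where the entire difficulty of the theorem sits --- is only sketched. The preparatory observations there are fine (the sets $Y_b=S(T|_{A\setminus\{b\}})$ are pairwise distinct, none is $S$-stable in $T$, and external stability fails exactly at $b$, so $b\in S(T|_{Y_b\cup\{b\}})$), but the ``concluding step'' is a statement of intent, not an argument: you never exhibit the ``carefully chosen pair of subsets'' for $\gammahat$ in $T^a$, nor show how the inductive hypothesis forces some $b$ out of $S(T^a)$. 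Moreover, even the ingredients you name are not always available: $T|_{Y_b\cup\{a,b\}}$ is \emph{not} strictly smaller than $T$ whenever $Y_b\supseteq A\setminus\{a,b\}$ (e.g.\ $Y_b=A\setminus\{b\}$ or $Y_b=A\setminus\{a,b\}$ are perfectly possible), so the inductive hypothesis cannot be invoked there without an additional case analysis. As it stands, the key case is asserted rather than proved.

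It is also worth flagging why this route is risky. The paper's proof does not stay inside $T$ and its subtournaments at all: it builds a \emph{larger} tournament $T'$ on two disjoint copies $X$ and $Y$ of $T|_{A\setminus\{a\}}$, with the cross-edges arranged so that every alternative of one copy relates to the other copy exactly as $a$ relates to $A\setminus\{a\}$ in $T$ or in $T^a$. Then $a\notin S(T)$ and $a\notin S(T^a)$ make both copies externally $S$-stable, and $\alphahat$ together with $\gammahat$ force $S(X\cup Y)$ to be a nonempty subset of $X$ and of $Y$ simultaneously --- a contradiction. In other words, the known argument derives $\lrsin$ at $T$ from stability applied to feasible sets of order $2(|A|-1)$ that are not subtournaments of $T$ or $T^a$; this is at least a warning that a purely internal induction may be unable to close your remaining case. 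To salvage your proposal you would need to actually carry out the final $\gammahat$ argument (including the subcases where $Y_b\cup\{a,b\}=A$), or else switch to a construction that, like the paper's, exploits stability on a suitably built auxiliary tournament.
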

	\begin{proof}\label{pf:}
		Suppose for contradiction that $S$ is stable but violates \lrsin. Then there exists a tournament $T=(A,\succ)$ and an alternative $a\in A$ such that $a\notin S(T)$ and $a\notin S(T^a)$.
		
		Let $T'=(A',\succ')$, where $A'=X\cup Y$ and each of $T'|_X$ and $T'|_Y$ is isomorphic to $T|_{A\setminus\{a\}}$. Also, partition $X=X_1\cup X_2$ and $Y=Y_1\cup Y_2$, where $X_1$ and $Y_1$ consist of the alternatives that are mapped to $\dom_T(a)$ by the isomorphism. To complete the definition of $T'$, we add the relations $X_1\succ'Y_2$, $Y_2\succ'X_2$, $X_2\succ'Y_1$, and $Y_1\succ'X_1$. The structure of tournament $T'$ is depicted in \figref{fig:selfstable-lrs}. 

We claim that both $X$ and $Y$ are externally $S$-stable in $T'$. To this end, we note that for every alternative $x\in X$ (resp. $y\in Y$) the subtournament $T|_{Y\cup\{x\}}$ (resp. $T|_{X\cup\{y\}}$) is isomorphic either to $T$ or to $T^a$, with $x$ (resp. $y$) being mapped to $a$. By assumption, $a$ is neither chosen in $T$ nor in $T^a$, and therefore $X$ and $Y$ are both externally $S$-stable in $T'$.

Now, suppose that $S(X\cup\{y\})=X'\subseteq X$ for some $y\in Y$. Note that $X'\neq\emptyset$ because tournament solutions always return non-empty sets. Since $S$ satisfies $\widehat\alpha$, we have $S(X)=X'$. Hence, $S(X)=X'=S(X\cup\{y\})$ for all $y\in Y$. Since $S$ satisfies $\widehat\gamma$, we also have $S(X\cup Y)=X'$. Similarly, we can deduce that $S(X\cup Y)=Y'$ for some $\emptyset\neq Y'\subseteq Y$. This yields the desired contradiction.
		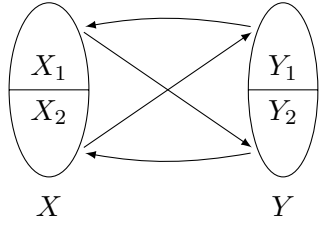
\begin{figure}[htbp]
		    \centering
		    \begin{tikzpicture}
		        [dom/.style={-latex, shorten >=1mm, shorten <=1mm}
		        ]
		        \node[draw, ellipse split, minimum height=6em] (T1){$X_1$ \nodepart{lower}$X_2$};
		        \node[draw, ellipse split, minimum height=6em, node distance=8em, right of=T1] (T2) {$Y_1$ \nodepart{lower} $Y_2$};
		        \node[below of=T1] {$X$};
		        \node[below of=T2] {$Y$};
		        \draw[dom] (T1.north east) to (T2.south west);
		        \draw[dom] (T2.south west) to[bend left=10] (T1.south east);
		        \draw[dom] (T1.south east) to (T2.north west);
		        \draw[dom] (T2.north west) to[bend right=10] (T1.north east);
		    \end{tikzpicture}
		    \caption{Construction of a tournament $T'$ with two $S$-stable sets $X$ and $Y$ used in the proof of \thmref{thm:selfstable-lrsin}.}
		    \label{fig:selfstable-lrs}
		\end{figure}
	\end{proof}
	
\subsection{Disproving Stability}
	
As discussed in \secref{sec:tsolutions}, disproving that a tournament solution satisfies stability can be very difficult. By virtue of \thmref{thm:selfstable-lrsin}, it now suffices to show that the tournament solution violates \lrsin. For $\teq$, this leads to the first concrete tournament in which $\teq$ violates stability. With the help of a computer, we found a minimal tournament in which \teq violates \lrsin using exhaustive search. This tournament is of order $13$ and thereby lies exactly at the boundary of the class of tournaments for which exhaustive search is possible. 
%It is technically infeasible (and will remain technically infeasible in the foreseeable future) to exhaustively search the space of tournaments with $14$ alternatives. 
Using the construction explained in the proof of \thmref{thm:selfstable-lrsin}, we thus obtain a tournament of order $24$ in which $\teq$ violates $\widehat{\gamma}$. This tournament consists of two disjoint isomorphic subtournaments of order $12$ both of which are $\teq$-retentive. 

\begin{theorem}
\teq violates \lrsin.
\end{theorem}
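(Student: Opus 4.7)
The plan is to exhibit a concrete tournament $T$ together with an alternative $a$ for which both $a\notin\teq(T)$ and $a\notin\teq(T^a)$ hold, directly refuting \lrsin. Since \teq is stable on all sufficiently small tournaments (the results cited from \citet{BCK+11a} show that prior known counterexamples to stability have about $10^{136}$ alternatives), no counterexample can be derived from general structural arguments, so the proof is inherently computational: one must construct a specific tournament and verify \teq on it and on its local reversal.

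The first step is to search for a minimum-order counterexample by exhaustively enumerating tournaments of increasing order $n$. For each tournament $T=(A,{\succ})$ on $n$ vertices and each alternative $a\in A$, one computes $\teq(T)$ and $\teq(T^a)$ and checks whether $a$ is absent from both. To keep the enumeration tractable, tournaments should be enumerated up to isomorphism (using, say, canonical forms such as nauty-labelings), and the search should abort at the first witness. According to the authors, the smallest order at which a violation occurs is $n=13$, which is on the edge of what exhaustive search can handle.

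The second step is to present the resulting tournament of order $13$ explicitly, e.g.\ via its adjacency matrix, and to verify the two non-memberships by direct recursive computation of \teq. Recall that $\teq$ on a feasible set $X$ is the union of the inclusion-minimal $\teq$-retentive subsets of $X$, where retentivity of $Y\subseteq X$ means $\teq(\dom_{T|_X}(y))\subseteq Y$ for each $y\in Y$ with $\dom_{T|_X}(y)\neq\emptyset$. Because the dominator sets inside any subtournament are strictly smaller than the subtournament itself, the recursion bottoms out quickly, and for $n=13$ the entire computation can be certified by a computer implementation. One then reads off that $a$ lies in no minimal $\teq$-retentive set of $T$, and likewise none of $T^a$.

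The main obstacle is computational rather than conceptual. The exhaustive search requires care to avoid blowup (restricting to non-isomorphic tournaments and pruning aggressively once partial dominance structure forces $a\in\teq$), and the subsequent verification of $\teq$ involves a nontrivial recursive evaluation on many subtournaments; neither task is easily reproducible by hand. However, once the concrete $13$-vertex tournament $T$ and alternative $a$ are fixed and the two non-memberships are certified, the conclusion is immediate from the definition of \lrsin. Combined with \thmref{thm:selfstable-lrsin}, this also yields the corollary that \teq fails stability, and an explicit $24$-vertex witness to the failure of $\widehat{\gamma}$ via the doubling construction in the proof of \thmref{thm:selfstable-lrsin}.
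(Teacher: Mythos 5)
Your route is the same as the paper's in spirit: an exhaustive computer search over tournaments of increasing order for an alternative $a$ with $a\notin\teq(T)$ and $a\notin\teq(T^a)$, a minimal witness at order $13$, and the observation that via the construction in \thmref{thm:selfstable-lrsin} this yields a $24$-alternative tournament in which \teq violates $\widehat{\gamma}$. However, as written your text does not prove the theorem, because the witness is never exhibited. The statement is existential, and its proof consists precisely of the concrete object: the paper writes down an explicit tournament $T$ on $\{x_1,\dots,x_{13}\}$ by listing all dominator sets, then records $\teq(\dom(x_i))$ for $i=1,\dots,12$ (the data needed to evaluate retentiveness), and from this certifies $\teq(T)=\teq(T^{x_{13}})=\{x_1,\dots,x_{12}\}$, so that $x_{13}$ is unchosen in both $T$ and $T^{x_{13}}$. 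Your proposal replaces all of this with a description of how one would search for and verify such a tournament, and even appeals to ``according to the authors'' for the fact that a violation occurs at order $13$ --- which, in a blind proof, is circular: that fact is exactly what must be established by producing and checking the example.

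So the gap is concrete: without specifying the $13$-vertex tournament (e.g.\ by its dominance relation) and supplying the verification data for $\teq$ on it and on its local reversal at the distinguished alternative, there is nothing to check and no proof. A correct write-up along your lines would need to (i) state the tournament explicitly, and (ii) either report the intermediate $\teq$ values of the relevant dominator sets (as the paper does) or otherwise make the computation reproducible, after which the conclusion $x_{13}\notin\teq(T)$ and $x_{13}\notin\teq(T^{x_{13}})$ indeed follows immediately from the definition of \lrsin. The methodological remarks about isomorphism-free enumeration and pruning are sensible but do not substitute for the witness.
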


\begin{proof}
We define a tournament $T=(\{x_1,\dots,x_{13}\},\succ)$ such that $x_{13}\not\in \teq(T)$ and $x_{13}\not\in \teq(T^{x_{13}})$.
The dominator sets in $T$ are defined as follows:
\[
\begin{array}{lcllcl}
	\dom(x_1)		&=&	\{x_4,x_5,x_6,x_8,x_9,x_{12}	\}\text{, }	&
	\dom(x_2)		&=&	\{x_1,x_6,x_7,x_{10},x_{12}		\}\text{, }\\	
	\dom(x_3)		&=&	\{x_1,x_2,x_6,x_7,x_9,x_{10}	\}\text{, }	&
	\dom(x_4)		&=&	\{x_2,x_3,x_7,x_8,x_{11}		\}\text{, }\\
	\dom(x_5)		&=&	\{x_2,x_3,x_4,x_8,x_{10},x_{11}	\}\text{, }	&
	\dom(x_6)		&=&	\{x_4,x_5,x_9,x_{11},x_{12}		\}\text{, }\\
	\dom(x_7)		&=&	\{x_1,x_5,x_6,x_{11},x_{12},x_{13}		\}\text{, }	&
	\dom(x_8)		&=&	\{x_2,x_3,x_6,x_7,x_{12},x_{13}		\}\text{, }\\	
	 \dom(x_9)	&=&	\{x_2,x_4,x_5,x_7,x_8,x_{13}			\}\text{, }	&
	 \dom(x_{10})	&=&	\{x_1,x_4,x_6,x_7,x_8,x_9,x_{13}		\}\text{, }\\
	 \dom(x_{11})	&=&	\{x_1,x_2,x_3,x_8,x_9,x_{10},x_{13}	\}\text{, }	&
	 \dom(x_{12})	&=&	\{x_3,x_4,x_5,x_9,x_{10},x_{11},x_{13}	\}\text{, }\\
	 \dom(x_{13}) &=& \{x_1,x_2,x_3,x_4,x_5,x_6\}\text{.}
\end{array}
\]

A rather tedious check reveals that
\[
\begin{array}{lcllcl}
	\teq(\dom(x_1))  	&=& \{x_4,x_8,x_{12}\}    \text{, } &
	\teq(\dom(x_2))  	&=&  \{x_6,x_{10},x_{12}\} \text{, }\\
	\teq(\dom(x_3))  	&=&  \{x_6,x_7,x_9\} 	    \text{, } &
	\teq(\dom(x_4))  	&=&  \{x_2,x_7,x_{11}\}    \text{, }\\
	\teq(\dom(x_5))  	&=&  \{x_2,x_8,x_{10}\}    \text{, } &
	\teq(\dom(x_6))  	&=&  \{x_4,x_9,x_{11}\}    \text{, }\\
	\teq(\dom(x_7))  	&=&  \{x_1,x_5,x_{11}\}    \text{, } &
	\teq(\dom(x_8))  	&=&  \{x_3,x_6,x_{12}\}    \text{, }\\
	\teq(\dom(x_9))  	&=&  \{x_2,x_5,x_{7}\}     \text{, } &
	\teq(\dom(x_{10}))&=& \{x_4,x_6,x_7\} 	    \text{, }\\
	\teq(\dom(x_{11}))&=& \{x_1,x_2,x_8\} 	    \text{, and } &
	\teq(\dom(x_{12}))&=& \{x_3,x_4,x_9\} 	    \text{.}\\
\end{array}
\]
It can then be checked that $\teq(T)=\teq(T^{x_{13}})=\{x_1,\dots,x_{12}\}$.
\end{proof}

Let $\nteq$ be the greatest natural number $n$ such that all tournaments of order $n$ or less admit a unique inclusion-minimal $\teq$-retentive set.
Together with earlier results by \citet{BFHM09a} and \citet{Yang16a}, we now have that $14 \leq \nteq \leq 23$.

The tournament used in the preceding proof does not show that $\me$ (or $\ba$) violate \lrsin. A computer search for such tournaments was unsuccessful. While it is known that $\me$ violates stability, a concrete counterexample thus remains elusive.

\subsection{Most Discriminating Stable Tournament Solutions}

An important property of tournament solutions that is not captured by any of the axioms introduced in \secref{sec:prelims} is discriminative power.\footnote{To see that discriminative power is not captured by the axioms, observe that the trivial tournament solution satisfies stability, monotonicity, regularity, and composition-consistency.} 
It is known that $\ba$ and $\mc$ (and by the known inclusions also $\uc$ and $\tc$) almost always select all alternatives when tournaments are drawn uniformly at random and the number of alternatives goes to infinity \citep{Fey08a,ScFe11a}.\footnote{However, these analytic results stand in sharp contrast to empirical observations that Condorcet winners are likely to exist in real-world settings, which implies that tournament solutions are much more discriminative than results for the uniform distribution suggest \citep{BrSe15a}.}
 Experimental results suggest that the same is true for $\teq$. Other tournament solutions, which are known to return small choice sets, fail to satisfy stability and composition-consistency. A challenging question is how discriminating tournament solutions can be while still satisfying desirable axioms.

\lrs reveals an illuminating dichotomy in this context for common tournament solutions. We state without proof that discriminating tournament solutions such as Copeland's rule, Slater's rule, and Markov's rule satisfy \lrsout and violate \lrsin. On the other hand, coarse tournament solutions such as \tc, \uc, and \mc satisfy \lrsin and violate \lrsout. The bipartisan set hits the sweet spot because it is the only one among the commonly considered tournament solutions that satisfies \lrsin \emph{and} \lrsout (and hence \lrs).

\begin{theorem}\label{thm:bp-lrs}
	\bp satisfies \lrs.
\end{theorem}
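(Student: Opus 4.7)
The plan is to establish, for every tournament $T$ and alternative $a$, that exactly one of $a \in \bp(T)$ and $a \in \bp(T^a)$ holds. The key algebraic input is the identity $G(T^a) = D\, G(T)\, D$, where $D$ is the diagonal $\pm 1$ matrix with $-1$ in position $a$, which holds because local reversal at $a$ negates exactly the $a$-th row and column of the skew-adjacency matrix. Let $G := G(T)$, and let $p := p_T$ and $p' := p_{T^a}$ be the unique equilibria of the two tournament games guaranteed by Laffond--Laslier--Le Breton and Fisher--Ryan. I use the quasi-strictness of these equilibria, a standard consequence of uniqueness in the tournament setting: $(p^T G)_j \ge 0$ with equality iff $j \in \bp(T)$, and analogously for $p'$ and $G' := G(T^a)$.

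Setting $s' := D p'$, the skew-symmetry of $G$ yields $p^T G s' + s'^T G p = 0$. Splitting off the $a$-indexed terms gives
\[
\sum_{j \neq a} p'_j\,(p^T G)_j \;+\; \sum_{i \neq a} p_i\, (s'^T G)_i \;=\; p'_a\, \alpha \;+\; p_a\, \beta,
\]
where $\alpha := (p^T G)_a$ and $\beta := -(s'^T G)_a$. A direct computation using $s' = D p'$ and $g_{aa} = 0$ shows that $\beta$ coincides with the analogue of $\alpha$ for the game $T^a$, namely $\beta = (p'^T G')_a$; in particular, $\alpha = 0 \iff a \in \bp(T)$ and $\beta = 0 \iff a \in \bp(T^a)$. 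Every summand on both sides of the display is nonnegative.

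I then rule out the two bad configurations. First, suppose $a \in \bp(T) \cap \bp(T^a)$. Then $\alpha = \beta = 0$ and $p_a, p'_a > 0$, so the right-hand side vanishes and every summand on the left must vanish. Quasi-strictness then yields $\bp(T)\setminus\{a\} = \bp(T^a)\setminus\{a\}$, and hence $\bp(T) = \bp(T^a) =: B$. On $B$, both $p|_B$ and $D|_B\, p'|_B$ lie in $\ker G|_B$. Uniqueness of the fully-supported equilibrium of the restricted game $G|_B$ forces $\ker G|_B$ to be one-dimensional (otherwise one could perturb $p|_B$ along a zero-sum direction in the kernel, stay inside $\Delta(B)$, and produce a second equilibrium). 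Hence $D|_B\, p'|_B = c\, p|_B$; the off-$a$ coordinates force $c > 0$, but the $a$-coordinate gives $-p'_a = c\, p_a$, a sign contradiction. Second, suppose $a \notin \bp(T) \cup \bp(T^a)$. Then $\alpha, \beta > 0$ and $p_a = p'_a = 0$, so the right-hand side vanishes again and the same collapsing argument yields $\bp(T) = \bp(T^a) =: B$ with $a \notin B$. Since edges at $a$ are irrelevant within $B$, we have $G|_B = G'|_B$, and uniqueness gives $p = p'$; combined with $p'_a = 0$ this yields $s' = p$, whence $\beta = -\alpha$, contradicting $\alpha, \beta > 0$.

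The main obstacle is the identification $\beta = (p'^T G')_a$, which is what lets the two tournaments $T$ and $T^a$ communicate through the single bilinear form $x^T G y$; the subsidiary ingredient is the dimension count for $\ker G|_B$, which depends on the uniqueness and quasi-strictness of tournament-game equilibria applied to the restricted game $T|_B$. Once these two pieces are set up, both directions of \lrs fall out of the same tight sign analysis of the displayed identity.
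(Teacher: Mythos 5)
Your proof is correct, but it takes a genuinely different route from the paper. The paper splits the statement: \lrsin is obtained for free from stability of \bp via \thmref{thm:selfstable-lrsin}, and \lrsout follows in three lines from Laslier's characterization $a\in\bp(T)\iff p_{T|_{A\setminus\{a\}}}(D(a))>p_{T|_{A\setminus\{a\}}}(\dom(a))$, using only the observation that $T|_{A\setminus\{a\}}=T^a|_{A\setminus\{a\}}$ while $D(a)$ and $\dom(a)$ swap. You instead prove both halves simultaneously and self-containedly, without invoking stability of \bp or the restricted-game characterization: the conjugation identity $G(T^a)=DG(T)D$, skew-symmetry applied to $p^TGs'+s'^TGp=0$ with $s'=Dp'$, and quasi-strictness of the two equilibria force, in either bad configuration, that $\bp(T)=\bp(T^a)=:B$; the case $a\in B$ is then killed by your one-dimensionality argument for $\ker G|_B$ (which is sound, given that $p|_B$ is a fully supported equilibrium of the tournament game on $B$ and skew-symmetry lets you perturb along any kernel direction with zero coordinate sum), and the case $a\notin B$ by $G|_B=G'|_B$, uniqueness, and $\beta=-\alpha$. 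Two small points to tighten: the quasi-strictness assertion ``$(p^TG)_j=0$ iff $j\in\bp(T)$'' is not a consequence of uniqueness of equilibria in general games, but for zero-sum games it does follow from uniqueness together with strict complementarity of the associated linear programs (Goldman--Tucker), or can be cited from Laslier's book, so state that explicitly; and in the first configuration the sub-case $B=\{a\}$ (no off-$a$ coordinate to pin down the sign of $c$) should be noted as vacuous, since $a$ cannot be a Condorcet winner of both $T$ and $T^a$ when $|A|\ge 2$. What your approach buys is independence from the choice-theoretic machinery (it never uses $\widehat{\alpha}$, $\widehat{\gamma}$, or the doubling construction behind \thmref{thm:selfstable-lrsin}) and a single unified sign identity explaining both \lrsin and \lrsout for \bp; what the paper's route buys is brevity and the stronger structural message that \lrsin is not special to \bp but holds for \emph{every} stable tournament solution, which is exactly what is needed later for the maximal-discrimination results.
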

\begin{proof}
		Since \bp is stable, \thmref{thm:selfstable-lrsin} implies that $\bp$ satisfies \lrsin.
		Now, assume for contradiction that \bp violates \lrsout, \ie there is a tournament $T=(A,\succ)$ and an alternative $a$ such that $a \in \bp(T)$ and $a\in \bp(T^a)$. For a probability distribution~$p$ and a subset of alternatives $B\subseteq A$, let $p(B) = \sum_{x\in B} p(x)$. Consider the optimal mixed strategy $p_{T|_{A\setminus\{a\}}}$ in tournament $T|_{A\setminus\{a\}}$.
		It is known from \citet[][Prop.~6.4.8]{Lasl97a} that $a\in\bp(T)$ if and only if $p_{T|_{A\setminus\{a\}}}(D(a))>p_{T|_{A\setminus\{a\}}}(\dom(a))$. For $T^a$, we thus have that 
$p_{T^a|_{A\setminus\{a\}}}(D(a))>p_{T^a|_{A\setminus\{a\}}}(\dom(a))$. This is a contradiction because $D_T(a)=\dom_{T^a}(a)$ and $\dom_T(a)=D_{T^a}(a)$.
\end{proof}

The relationship between \lrs and the discriminative power of tournament solutions is no coincidence. To see this, consider all \emph{labeled} tournaments of fixed order and an arbitrary alternative $a$. \lrsin demands that $a$ is chosen in \emph{at least} one of $T$ and $T^a$ while \lrsout requires that $a$ is chosen in \emph{at most} one of $T$ and $T^a$. We thus obtain the following consequences.
	
\begin{itemize}
\item A tournament solution satisfying \lrsin chooses on average at least half of the alternatives.
\item A tournament solution satisfying \lrsout chooses on average at most half of the alternatives.
\item A tournament solution satisfying \lrs chooses on average half of the alternatives.
\end{itemize}

Hence, the well-known fact that \bp chooses on average half of the alternatives \citep{FiRe95a} follows from \thmref{thm:bp-lrs}.
Also, all coarsenings of $\bp$ such as \mc, \uc, and \tc satisfy \lrsin by virtue of \lemref{lem:lrs-inheritance}. On the other hand, since these tournament solutions are all different from \bp, they choose on average more than half of the alternatives and hence cannot satisfy \lrsout.

These results already hint at $\bp$ being perhaps a ``most discriminating'' stable tournament solution. In order to make this precise, we formally define the discriminative power of a tournament solution. For two tournament solutions $S$ and $S'$, we say that $S$ is \emph{more discriminating} than $S'$ if there is $n\in \mathbb{N}$ such that the average number of alternatives chosen by $S$ is lower than that of $S'$ over all labeled tournaments of order $n$. Note that this definition is very weak because we only have an existential, not a universal, quantifier for $n$. It is therefore even possible that two tournament solutions are more discriminating than each other. However, this only strengthens the following results.
Combining Theorems \ref{thm:selfstable-lrsin} and \ref{thm:bp-lrs} immediately yields the following theorem.

\begin{theorem}
\label{thm:stablelrs}
A stable tournament solution satisfies \lrs if and only if there is no more discriminating stable tournament solution.
\end{theorem}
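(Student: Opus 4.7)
The plan is to leverage the three bullet-point observations immediately preceding the theorem (which relate \lrsin, \lrsout, and \lrs to the average of $|S(T)|$ taken over all labeled tournaments of fixed order $n$), together with \thmref{thm:selfstable-lrsin} (every stable tournament solution satisfies \lrsin) and \thmref{thm:bp-lrs} (\bp is stable and satisfies \lrs). The main idea is to use \bp as a canonical ``half-size'' stable yardstick: any stable tournament solution is forced by \lrsin to choose on average \emph{at least} $n/2$ alternatives, whereas \bp hits the bound exactly.

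For the forward direction, assume $S$ is stable and satisfies \lrs. Then by the third bullet, the average of $|S(T)|$ over labeled tournaments of order $n$ equals exactly $n/2$ for every $n$. Let $S'$ be any other stable tournament solution. By \thmref{thm:selfstable-lrsin}, $S'$ satisfies \lrsin, so its average at every order $n$ is at least $n/2$. Hence $S'$ cannot be strictly more discriminating than $S$ at any $n$, as required.

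For the reverse direction, suppose $S$ is stable but fails \lrs. Since \thmref{thm:selfstable-lrsin} gives \lrsin for free, the failure must lie on the \lrsout side, so there exist a tournament $T_0$ of order $n_0$ and an alternative $a_0$ with $a_0\in S(T_0)\cap S(T_0^{a_0})$. I would then revisit the pairing argument behind the bullet points: for each fixed labeled vertex $a$, the map $T\mapsto T^a$ is an involution on the set of labeled tournaments of order $n_0$, and \lrsin yields $\mathbf{1}[a\in S(T)]+\mathbf{1}[a\in S(T^a)]\ge 1$ for every $T$. Summing over $T$ and then over $a$ gives average at least $n_0/2$; the witness $(T_0,a_0)$ contributes a pair with the sum equal to $2$ rather than $1$, which strictly increases the total and makes the average at order $n_0$ strictly greater than $n_0/2$. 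Since \bp is stable and satisfies \lrs (\thmref{thm:bp-lrs}), its average at order $n_0$ equals exactly $n_0/2$, so \bp is more discriminating than $S$ at $n_0$, completing the argument.

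The only subtle step is the strict-inequality upgrade: one must verify that the witness pair $\{T_0, T_0^{a_0}\}$ is not a fixed point of the involution and so genuinely contributes an excess of $1$ to the paired sum. This is immediate because, for $n_0 \ge 2$, the alternative $a_0$ has at least one neighbor, so $T_0 \ne T_0^{a_0}$ and the two tournaments are counted as distinct labeled tournaments. Everything else is bookkeeping on top of the already-proved \thmref{thm:selfstable-lrsin} and \thmref{thm:bp-lrs}, which is why the characterization drops out so cleanly.
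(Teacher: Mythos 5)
Your proof is correct and takes essentially the same route as the paper: both directions rest on \thmref{thm:selfstable-lrsin}, \thmref{thm:bp-lrs}, and the averaging observations preceding the theorem, with your reverse direction simply being the contrapositive of the paper's argument (you additionally spell out the fixed-point-free involution $T\mapsto T^a$ and the strict-inequality bookkeeping that the paper leaves implicit in its bullet points). No gaps.
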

\begin{proof}
First consider the direction from left to right. Let $S$ be a tournament solution that satisfies \lrs.
Due to the observations made above, $S$ chooses on average half of the alternatives. Since any stable tournament solution satisfies \lrsin by \thmref{thm:selfstable-lrsin}, it chooses on average at least half of the alternatives and therefore cannot be more discriminating than $S$.

Now consider the direction from right to left. Let $S$ be a most discriminating stable tournament solution.
Again, since any stable tournament solution satisfies \lrsin, $S$ chooses on average at least half of the alternatives. On the other hand, \bp is a stable tournament solution that chooses on average exactly half of the alternatives. This means that $S$ must also choose on average half of the alternatives, implying that it also satisfies \lrsout and hence \lrs.
\end{proof}

\begin{corollary}
\label{thm:nostablerefinementbp}
There is no more discriminating stable tournament solution than \bp. In particular, there is no stable refinement of \bp.
\end{corollary}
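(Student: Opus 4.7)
The plan is to derive both parts essentially as direct consequences of Theorem~\ref{thm:stablelrs} together with Theorem~\ref{thm:bp-lrs}, since all the real work has already been done in establishing those two results.

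For the first statement, I would simply invoke Theorem~\ref{thm:stablelrs}. The bipartisan set is stable (as recorded in Section~\ref{sec:bpandteq}) and satisfies \lrs by Theorem~\ref{thm:bp-lrs}. Hence the ``only if'' direction of Theorem~\ref{thm:stablelrs} immediately yields that no stable tournament solution is more discriminating than \bp. This is a one-line argument.

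For the ``in particular'' part, I would argue by contradiction. Suppose $S$ is a stable tournament solution with $S\subseteq\bp$ and $S\neq\bp$. Then there exists a tournament $T^{*}$ on which the inclusion is strict, i.e. $S(T^{*})\subsetneq\bp(T^{*})$. Let $n^{*}=|T^{*}|$. Because both $S$ and $\bp$ are invariant under tournament isomorphisms, the strict inclusion $|S(T)|<|\bp(T)|$ holds for every labeled tournament $T$ of order $n^{*}$ that is isomorphic to $T^{*}$, and the weak inclusion $|S(T)|\le|\bp(T)|$ holds for every labeled tournament of order $n^{*}$ by assumption. Averaging over all labeled tournaments of order $n^{*}$ therefore gives a strictly smaller mean choice-set size for $S$, so $S$ is more discriminating than \bp---contradicting the first part.

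There is essentially no obstacle here; the only thing to be mildly careful about is to use the definition of ``more discriminating'' exactly as stated (existence of some order $n$ on which the average choice-set size is strictly smaller), and to note that isomorphism-invariance of tournament solutions ensures that a single witnessing tournament $T^{*}$ suffices to force a strict inequality in the average over labeled tournaments of order $|T^{*}|$. No further property of \bp beyond its stability and the fact that it satisfies \lrs is needed.
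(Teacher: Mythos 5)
Your proposal is correct and matches the paper's (implicit) argument: the paper states this as an immediate corollary of Theorem~\ref{thm:stablelrs} applied to \bp (which is stable and satisfies \lrs by Theorem~\ref{thm:bp-lrs}), with the ``in particular'' part following exactly as you argue, since a strict stable refinement of \bp would, by isomorphism invariance and averaging over labeled tournaments of the witnessing order, be more discriminating than \bp, contradicting the first part.
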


Given Corollary \ref{thm:nostablerefinementbp}, a natural question is whether every stable tournament solution that satisfies mild additional properties such as monotonicity is a coarsening of \bp. We give an example in Appendix \ref{app:s7hat} which shows that this is not true.

Finally, we provide two axiomatic characterizations of \bp by leveraging other traditional properties. These characterizations leverage the following lemma, which entails that, in order to show that two stable tournament solutions that satisfy \lrs are identical, it suffices to show that their roots are contained in each other.

\begin{lemma}
\label{lem:stablelrsequal}
Let $S$ and $S'$ be two stable tournament solutions satisfying \lrs. Then $\rootsym[S]\subseteq\rootsym[S']$ if and only if $S=S'$.
\end{lemma}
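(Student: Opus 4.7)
The \emph{if} direction is immediate. For the \emph{only if} direction, I plan to prove $S(T) = S'(T)$ by induction on $n := |T|$. The base cases are easy: for $n = 1$ the claim is trivial, and for $n = 2$ \lrs implies that both $S(T)$ and $S'(T)$ are singletons (otherwise $S(T) = T$ would force an empty choice set in the reversed tournament), and then the hypothesis $\rootsym[S](T) \subseteq \rootsym[S'](T)$ forces these singletons to coincide.

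For the inductive step, let $P := S(T)$ and $Q := S'(T)$. Combining the inductive hypothesis with $\widehat{\alpha}$ gives $S'(T \setminus \{b\}) = S(T \setminus \{b\}) = P$ for every $b \notin P$, and symmetrically $S(T \setminus \{a\}) = Q$ for every $a \notin Q$. Thus if any alternative lies outside both $P$ and $Q$, we immediately conclude $P = Q$; so assume $P \cup Q = T$. Next, if $|T \setminus P| \ge 2$, pick distinct $b_1, b_2 \in T \setminus P \subseteq Q$; applying $\widehat{\gamma}$ for $S'$ to the equalities $S'(T \setminus \{b_1\}) = S'(T \setminus \{b_2\}) = P$ yields $S'(T) = P$, hence $P = Q$. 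The symmetric argument handles $|T \setminus Q| \ge 2$. What remains is the configuration in which each of $P$ and $Q$ is either $T$ or omits exactly one alternative.

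Three of the four resulting subcases follow directly from the hypothesis together with the simplicity of the roots: $P = Q = T$ is trivial; $P = T$, $Q = T \setminus \{a\}$ would force $\rootsym[S](T) = T \subseteq T \setminus \{a\}$, a contradiction; and $P = T \setminus \{b\}$, $Q = T \setminus \{a\}$ forces $T \setminus \{b\} \subseteq T \setminus \{a\}$ and hence $a = b$, so $P = Q$. The genuinely delicate subcase is $P = T \setminus \{b\}$ with $Q = T$, where the hypothesis is consistent and \lrs must be invoked.

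In that remaining subcase I would argue as follows. Since $b \in Q = S'(T)$, \lrs for $S'$ gives $b \notin S'(T^b)$; then $\widehat{\alpha}$ for $S'$ together with the observation $T^b|_{T \setminus \{b\}} = T|_{T \setminus \{b\}}$ reduces $S'(T^b)$ to $S'(T|_{T \setminus \{b\}})$, which by induction equals $S(T|_{T \setminus \{b\}}) = P = T \setminus \{b\}$. Consequently $\rootsym[S'](T^b) = T \setminus \{b\}$, and the simplicity of $\rootsym[S]$ combined with $\rootsym[S](T^b) \subseteq T \setminus \{b\}$ forces $\rootsym[S](T^b) = T \setminus \{b\}$, so $b \notin S(T^b)$. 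But $b \notin S(T)$ together with \lrs for $S$ demands $b \in S(T^b)$, a contradiction. I expect this to be the main obstacle: the hypothesis alone does not rule out $\rootsym[S](T) \subsetneq \rootsym[S'](T) = T$, and the role of \lrs is precisely to transport the discrepancy to the locally reversed tournament $T^b$, where the two roots can be compared directly to produce the contradiction.
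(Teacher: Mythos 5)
Your proof is correct, but it follows a genuinely different route from the paper's. The paper first shows $S(T)=S'(T)$ whenever $S'(T)\neq T$ (hence $S\subseteq S'$) by applying $\widehat{\alpha}$ of $S'$ to the sets $S'(T)\cup\{a\}$, transferring the exclusion from $\rootsym[S']$ to $\rootsym[S]$ via simplicity, and gluing with $\widehat{\gamma}$ of $S$; it then closes the remaining gap (the tournaments where $S'(T)=T$) globally, by invoking the counting consequence of \lrs that both solutions choose on average exactly half of the alternatives over all labeled tournaments of a given order. You instead run an induction on $|T|$ and resolve everything pointwise: the induction hypothesis plus $\widehat{\alpha}$ and $\widehat{\gamma}$ eliminate all configurations except $S(T)=T\setminus\{b\}$, $S'(T)=T$, and there you use \lrsout of $S'$ to force $S'(T^b)=T\setminus\{b\}$, compare roots in $T^b$ via the hypothesis and simplicity, and contradict \lrsin of $S$. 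Your approach buys a self-contained, purely local argument that avoids the averaging machinery entirely and isolates exactly where \lrs is needed---indeed, beyond stability you only use \lrsout of $S'$ (and \lrsin of $S$, which is already implied by stability via \thmref{thm:selfstable-lrsin}), so your argument in fact proves a slightly sharper statement; the paper's proof is shorter and reuses the counting facts about \lrs that it has already developed for the discriminative-power results.
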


\begin{proof}
Suppose that $\rootsym[S]\subseteq\rootsym[S']$, and consider any tournament $T$. We will show that $S(T)\subseteq S'(T)$. If $S'(T)=T$, this is already the case. Otherwise, we have $S'(S'(T)\cup\{a\})=S'(T)$ for each $a\not\in S'(T)$. By definition of the root function, $\rootsym[S'](S'(T)\cup\{a\})=S'(T)$. Since the root function excludes at most one alternative from any tournament, we also have $\rootsym[S](S'(T)\cup\{a\})=S'(T)$ by our assumption $\rootsym[S]\subseteq\rootsym[S']$. Hence $S(S'(T)\cup\{a\})=S'(T)$ as well. Using $\widehat{\gamma}$ of $S$, we find that $S(T)=S'(T)$. So $S(T)\subseteq S'(T)$ for every tournament $T$. However, since $S$ and $S'$ satisfy \lrs, and therefore choose on average half of the alternatives, we must have $S=S'$.

Finally, if $S=S'$, then clearly $\rootsym[S]=\rootsym[S']$ and so $\rootsym[S]\subseteq\rootsym[S']$.
\end{proof}

\begin{theorem}
\label{thm:BPcharLRS}
\bp is the only tournament solution that satisfies stability, composition-consistency, monotonicity, regularity, and \lrs.
\end{theorem}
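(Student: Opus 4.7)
The forward direction is immediate from earlier results: $\bp$ satisfies stability, composition-consistency, monotonicity, and regularity (all noted in \secref{sec:bpandteq}), and \lrs by \thmref{thm:bp-lrs}.

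For the converse, let $S$ be a tournament solution satisfying all five properties. Both $S$ and $\bp$ are stable and satisfy \lrs, so by the size-counting consequences recorded just before \thmref{thm:stablelrs}, both choose on average exactly half of the alternatives over all labeled tournaments of each fixed order; thus any pointwise one-sided inclusion between $S$ and $\bp$ must in fact be an equality. Equivalently, by \lemref{lem:stablelrsequal} it suffices to establish $\rootsym[\bp] \subseteq \rootsym[S]$, which unfolds into the following claim: whenever $\bp(T) = A \setminus \{a\}$ for some tournament $T = (A, \succ)$ and alternative $a \in A$, one also has $S(T) = A \setminus \{a\}$.

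The plan for the pointwise comparison is to exploit composition-consistency in order to translate the game-theoretic definition of $\bp$ into a combinatorial setting where the axioms of $S$ bite. Given $T$ with $\bp(T) = A \setminus \{a\}$, Laslier's Proposition~6.4.8 (recalled in the proof of \thmref{thm:bp-lrs}) supplies a rational distribution $p = p_{T|_{A \setminus \{a\}}}$ with $p(\dom_T(a)) \geq p(D_T(a))$. After clearing denominators to obtain integer weights $(m_x)_{x \neq a}$, form a composition $T^{*} = \prod(T, T_a, (T_x)_{x \neq a})$ in which $T_a$ is a singleton and each $T_x$ (for $x \neq a$) is a regular tournament of order $m_x$. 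By composition-consistency and regularity applied to both $S$ and $\bp$, the membership $a \in S(T^{*})$ is equivalent to $a \in S(T)$, and similarly for $\bp$; but the construction makes the dominator set of $a$ in $T^{*}$ outweigh its dominion. This combinatorial majority, combined with monotonicity, regularity, and stability of $S$---applied to carefully chosen subtournaments and local reversals of $T^{*}$---should force $a \notin S(T^{*})$, hence $a \notin S(T)$. \lemref{lem:stablelrsequal} then delivers $S = \bp$.

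The hardest step is the final deduction that the weighted-majority imbalance at $a$ in $T^{*}$ excludes $a$ from $S(T^{*})$. No single axiom delivers this directly; rather, one needs an auxiliary lemma that interleaves monotonicity (to reverse selected edges incident to $a$ without leaving the regime), regularity (to recognize balanced subtournaments and pin down choices on them), and stability (to transfer conclusions between subtournaments via $\widehat\alpha$ and $\widehat\gamma$). Establishing this lemma is essentially a purely axiomatic rederivation of the property that distinguishes $\bp$'s equilibrium support from other tournament solutions, and is the technical heart of the argument.
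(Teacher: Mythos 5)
Your reduction is exactly the paper's: establish that whenever $\bp(T)=A\setminus\{a\}$ one also has $S(T)=A\setminus\{a\}$, then conclude $S=\bp$ via \lemref{lem:stablelrsequal}, and use the odd equilibrium weights together with composition-consistency and regularity to pass to a blown-up tournament $T_1$ in which $T_1|_{A\setminus\{a\}}$ is regular and $a$'s dominators outnumber its dominion. (Minor slip: you write the inclusion as $\rootsym[\bp]\subseteq\rootsym[S]$ but unfold it as the opposite inclusion $\rootsym[S]\subseteq\rootsym[\bp]$, which is the one the paper proves; since the lemma applies to any two stable \lrs solutions either inclusion would do, but your statement and its unfolding do not match.)

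The genuine gap is that you stop precisely at the step that carries the content of the theorem: deducing $a\notin S(T_1)$ from the degree imbalance. You defer this to an unproven ``auxiliary lemma'' and only say it \emph{should} follow from monotonicity, regularity and stability applied to suitable subtournaments and local reversals; note also that \lrsout of $S$ must enter here pointwise, not merely through the averaging argument, and your sketch never invokes it. The paper closes this step with a short explicit construction: by monotonicity (in contrapositive) one may first strengthen $a$ until $|\dom(a)|=\tfrac{|A|+1}{2}$; then add a new alternative $b$ with $D(a)\succ\{b\}\succ\dom(a)$ and $a\succ b$, producing a \emph{regular} tournament $T_2$, so regularity of $S$ gives $b\in S(T_2)$; \lrsout then yields $b\notin S\bigl((T_2)^b\bigr)$; if $a$ were chosen in $(T_2)^b$, reversing the $ab$ edge and using monotonicity plus the isomorphism exchanging the positions of $a$ and $b$ would force $b\in S\bigl((T_2)^b\bigr)$, a contradiction, so $a\notin S\bigl((T_2)^b\bigr)$; finally $\widehat{\alpha}$ transfers this back to $a\notin S(T_1)$, and $\widehat{\alpha}$, regularity and composition-consistency give $S(T)=A\setminus\{a\}$. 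Without supplying this construction (or an equivalent argument), your proposal establishes only the reduction, not the theorem.
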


\begin{proof}
Let $S$ be a tournament solution satisfying the five aforementioned properties. Since $S$ and \bp are stable and satisfy \lrs, by \lemref{lem:stablelrsequal} it suffices to show that $\rootsym[S]\subseteq\rootsym[\bp]$. This is equivalent to showing that when $\rootsym[\bp]$ excludes an alternative from a tournament, then $\rootsym[S]$ excludes the same alternative. In other words, we need to show that when \bp excludes exactly one alternative $a$, then $S$ also only excludes $a$.

Let $T$ be a tournament in which \bp excludes exactly one alternative $a$. 
As defined in \secref{sec:bpandteq}, $\bp(T)$ corresponds to the support of the unique Nash equilibrium of $G(T)$. \citet{LLL93b} and \citet{FiRy95a} have shown that this support is always of odd size and that the equilibrium weights associated to the alternatives in $\bp(T)$ are odd numbers. 
Hence, using composition-consistency of \bp and the fact that the value of a symmetric zero-sum game is zero, $T$ can be transformed into a new (possibly larger) tournament $T_1=(A,\succ)$ by replacing each alternative except $a$ with a regular tournament of odd order such that $T_1|_{A\setminus\{a\}}$ is regular. Moreover, in $T_1$, $|\dom(a)|>\frac{|A|}{2}$. %\citep[\eg][Prop.~6.2.4]{Lasl97a}. 

We will now show that $a\not\in S(T_1)$. Since $S$ is monotonic, it suffices to prove this when we strengthen $a$ arbitrarily against alternatives in $T_1$ until $|\dom(a)|=\frac{|A|+1}{2}$. 

Let $X=D(a)$ and $Y=\dom(a)$, and let $T_2$ be a tournament obtained by adding a new alternative $b$ to $T_1$ so that $X\succ \{b\}\succ Y$ and $a\succ b$. Note that $T_2$ is again a regular tournament, so $S(T_2)=A\cup \{b\}$. In particular, $b\in S(T_2)$.

	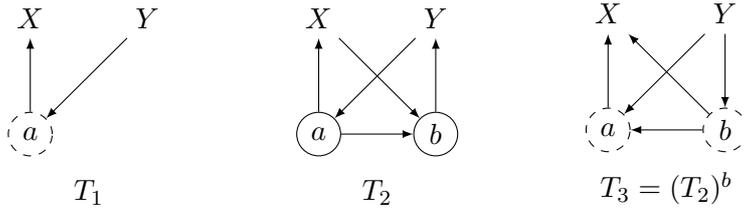
\begin{figure}[htb]
		\centering
		\begin{tikzpicture}[]
			\node (V) at (0,0) {$X$};
			\node (W) [right of=V] {$Y$};
			\node (y) [below of=W] {};
			\node[vertex,dashed] (x) [left of=y] {$a$};
			\foreach \x / \y in {x/V,W/x}{
			\draw[-latex] (\x) to (\y);
			};
			\node (caption) [node distance = 0.5*\nd,below of=d, xshift=0.5*\nd] {$T_1$};
		\end{tikzpicture}
		\qquad\qquad
		\begin{tikzpicture}[]
			\node (V) at (0,0) {$X$};
			\node (W) [right of=V] {$Y$};
			\node[vertex] (y) [below of=W] {$b$};
			\node[vertex] (x) [left of=y] {$a$};
			\foreach \x / \y in {x/V,x/y,y/W,V/y,W/x}{
			\draw[-latex] (\x) to (\y);
			};
			\node (caption) [node distance = 0.5*\nd,below of=d, xshift=0.5*\nd] {$T_2$};
		\end{tikzpicture}
		\qquad\qquad
		\begin{tikzpicture}[]
			\node (V) at (0,0) {$X$};
			\node (W) [right of=V] {$Y$};
			\node[vertex,dashed] (y) [below of=W] {$b$};
			\node[vertex,dashed] (x) [left of=y] {$a$};
			\foreach \x / \y in {x/V,y/x,y/V,W/x,W/y}{
			\draw[-latex] (\x) to (\y);
			};
			\node (caption) [node distance = 0.5*\nd,below of=d, xshift=0.5*\nd] {$T_3 = (T_2)^b$};
		\end{tikzpicture}
		\caption{Illustration of the proof of \thmref{thm:BPcharLRS}. Circled alternatives are contained in the choice set $S(\cdot)$. Alternatives circled with a dashed line are not contained in the choice set $S(\cdot)$.}
		\label{fig:BPproof}
	\end{figure}

Let $T_3=(T_2)^b$ be the tournament obtained from $T_2$ by reversing all edges incident to $b$. By \lrsout, we have $b\not\in S(T_3)$. If it were the case that $a\in S(T_3)$, then it should remain chosen when we reverse the edge between $a$ and $b$. However, alternative $a$ in the tournament after reversing the edge is isomorphic to alternative $b$ in $T_3$, and so we must have $b\in S(T_3)$, a contradiction. Hence $a\not\in S(T_3)$. Since $S$ satisfies stability and thus $\widehat{\alpha}$, we also have $a\not\in S(T_1)$, as claimed. See \figref{fig:BPproof} for an illustration.

Now, $\widehat{\alpha}$ and regularity of $S$ imply that $S(T_1)=S(T_1|_{A\backslash\{a\}})=A\backslash\{a\}$. Since $S$ satisfies composition-consistency, we also have that $S$ returns all alternatives except $a$ from the original tournament $T$, completing our proof.
\end{proof}

Based on Theorems \ref{thm:stablelrs} and \ref{thm:BPcharLRS}, we obtain another characterization that does not involve \lrs and hence only makes use of properties previously considered in the literature. 

\begin{corollary}
\label{thm:BPcharsize}
\bp is the unique most discriminating tournament solution that satisfies stability, composition-consistency, monotonicity, and regularity.
\end{corollary}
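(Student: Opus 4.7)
The plan is to derive the corollary by chaining Theorems \ref{thm:stablelrs} and \ref{thm:BPcharLRS}, using the fact that \bp is a stable tournament solution that chooses on average exactly half of the alternatives. For existence, I would first note that \bp satisfies stability, composition-consistency, monotonicity, and regularity (as recorded in \secref{sec:bpandteq}), and by \thmref{thm:bp-lrs} also satisfies \lrs; \thmref{thm:stablelrs} then implies that no stable tournament solution is more discriminating than \bp. In particular, no member of the four-property class can be more discriminating than \bp, so \bp is indeed a most discriminating solution in that class.

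For uniqueness, let $S$ be an arbitrary most discriminating tournament solution in the four-property class. The key step is to promote the maximality of $S$ from this class up to the full class of stable tournament solutions, so that \thmref{thm:stablelrs} becomes applicable. Since \bp sits in the four-property class, the maximality of $S$ implies that \bp is not more discriminating than $S$; unpacking the definition, this gives $f_S(n) \leq f_{\bp}(n) = n/2$ for every $n$, where $f_{S'}(n)$ denotes the average size of $S'(T)$ over all labeled tournaments of order $n$. On the other hand, every stable tournament solution $S'$ satisfies \lrsin by \thmref{thm:selfstable-lrsin}, hence $f_{S'}(n) \geq n/2$ for every $n$. Combining these bounds yields $f_{S'}(n) \geq n/2 \geq f_S(n)$ for every $n$ and every stable $S'$, so no stable tournament solution is more discriminating than $S$.

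With $S$ identified as a most discriminating stable tournament solution, \thmref{thm:stablelrs} yields that $S$ satisfies \lrs. Since $S$ also satisfies stability, composition-consistency, monotonicity, and regularity, \thmref{thm:BPcharLRS} forces $S = \bp$, completing the argument. The only subtle point, and the main thing I would flag as a potential obstacle, is unpacking the existential quantifier in the definition of ``more discriminating'' into the per-order inequality used above, since without this step one cannot legitimately combine the two bounds to promote the maximality of $S$ from the smaller class to the entire class of stable tournament solutions.
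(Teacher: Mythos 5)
Your proposal is correct and takes essentially the same route as the paper: maximality of $S$ within the class forces $S$ to choose on average at most half of the alternatives, which together with \lrsin (\thmref{thm:selfstable-lrsin}) upgrades to \lrs, and \thmref{thm:BPcharLRS} then gives $S=\bp$. The only cosmetic difference is that you promote $S$ to a most discriminating \emph{stable} tournament solution and cite \thmref{thm:stablelrs} as a black box, whereas the paper inlines that same counting argument; your careful unpacking of the existential quantifier in ``more discriminating'' into per-order inequalities is exactly the step the paper glosses over with ``as discriminating as \bp''.
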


\begin{proof}
Suppose that a tournament solution $S$ satisfies stability, composition-consistency, monotonicity, and regularity and is as discriminating as \bp. By \thmref{thm:selfstable-lrsin}, $S$ satisfies \lrsin. Since $S$ chooses on average half of the alternatives, it satisfies \lrsout and hence \lrs as well. \thmref{thm:BPcharLRS} then implies that $S=BP$.
\end{proof}

The only previous characterization of \bp that we are aware of was obtained by \citet[][Thm.~6.3.10]{Lasl97a} and is based on a rather contrived property called \emph{Copeland-dominance}. According to \citet[][p.~153]{Lasl97a}, ``this axiomatization of the Bipartisan set does not add much to our knowledge of the concept because it is merely a re-statement of previous propositions.''
\coref{thm:BPcharsize} essentially shows that, for most discriminating stable tournament solutions, Laslier's Copeland-dominance is implied by monotonicity and regularity.

We now address the independence of the axioms used in \thmref{thm:BPcharLRS}.

\begin{remark}
\lrs is not implied by the other properties. In fact, the trivial tournament solution satisfies stability, composition-consistency, monotonicity, and regularity.
\end{remark}

\begin{remark}
Monotonicity is not implied by the other properties. In fact, the tournament solution that returns $\bp(\overline{T})$, where $\overline{T}$ is the tournament in which all edges in $T$ are reversed, satisfies stability, composition-consistency, regularity, and \lrs.
\end{remark}

The question of whether stability, composition-consistency, and regularity are independent in the presence of the other axioms is quite challenging. We can only provide the following partial answers.

\begin{remark}\label{rem:pos}
Neither stability nor composition-consistency is implied by \lrs, monotonicity, and regularity. In fact, there is a tournament solution that satisfies \lrs, monotonicity, and regularity, but violates stability and composition-consistency (see Appendix~\ref{app:pos}). 
\end{remark}

\begin{remark}\label{rem:s7hat}
Neither regularity nor composition-consistency is implied by stability and monotonicity. In fact, there is a tournament solution that satisfies stability and monotonicity, but violates regularity and composition-consistency (see Appendix~\ref{app:s7hat}).
\end{remark}

\citet{BHS15a} brought up the question whether stability implies regularity (under mild assumptions) because all stable tournaments solutions studied prior to this paper were regular.\footnote{We checked on a computer that the stable tournament solution \tcring \citep[see][]{BBFH11a,BBH15a} satisfies regularity for all tournaments of order 17 or less.} \remref{rem:s7hat} shows that this does not hold without making assumptions that go beyond monotonicity.

Given the previous remarks, it is possible that composition-consistency and regularity are not required for \thmref{thm:BPcharLRS} and \coref{thm:BPcharsize}.
Indeed, our computer experiments have shown that the only stable and monotonic tournament solution satisfying \lrs for all tournaments of order up to 7 is \bp. This may, however, be due to the large number of automorphisms in small tournaments and composition-consistency and regularity could be required for larger tournaments.
It is also noteworthy that the proof of \thmref{thm:BPcharLRS} only requires a weak version of composition-consistency, where 
all components are tournaments in which all alternatives are returned due to automorphisms.

Since stability is implied by Samuelson's weak axiom of revealed preference or, equivalently, by transitive rationalizability, \coref{thm:BPcharsize} can be seen as an escape from Arrow's impossibility theorem where the impossibility is turned into a possibility by weakening transitive rationalizability and (significantly) strengthening the remaining conditions \citep[see, also,][]{BrHa11a}.

\section*{Acknowledgements}
This material is based on work supported by Deutsche Forschungsgemeinschaft under grants {BR~2312/7-1} and {BR~2312/7-2}, by a Feodor Lynen Research Fellowship of the Alexander von Humboldt Foundation, by ERC Starting Grant 639945, by a Stanford Graduate Fellowship, and by the MIT-Germany program.
The authors thank Christian Geist for insightful computer experiments and Paul Harrenstein for helpful discussions and preparing \figref{fig:stability-illustration}.

\bibliography{abb,brandt,pamas}
\appendix

\section{Appendix}

\subsection{Examples for \remref{rem:alphagamma}}
\label{app:alphagamma}

\citet[][p.~1729]{BrHa11a} mention that $\widehat{\alpha}$ and $\widehat{\gamma}$ are independent from each other in the context of general choice functions. Here, we prove that the same holds even in the context of tournament solutions.

\begin{proposition}\label{prop:alpha-not-gamma}
	There exists a tournament solution that satisfies $\widehat\alpha$, but not $\widehat\gamma$. 
\end{proposition}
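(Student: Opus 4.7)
The plan is to construct a small, deliberately artificial tournament solution that agrees with the trivial tournament solution on every tournament except those of size two, where it selects the dominator. Concretely, define
\[
S(T) \;=\; \begin{cases} \{a\} & \text{if } T=(\{a,b\},\succ) \text{ with } a\succ b, \\ A & \text{if } T=(A,\succ) \text{ and } |A|\neq 2.\end{cases}
\]
Non-emptiness is obvious, and isomorphism-invariance is immediate since the distinguishing condition ``$|A|=2$'' is purely structural, so $S$ is a genuine tournament solution in the sense of Section~\ref{sec:prelims}.

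To verify $\widehat\alpha$, I would do a case distinction on $|B|$. If $|B|\neq 2$, then $S(B)=B$; together with $S(B)\subseteq C\subseteq B$ this forces $C=B$, and the conclusion $S(C)=S(B)$ is trivial. If $|B|=2$, say $B=\{x,y\}$ with $x\succ y$, then $S(B)=\{x\}$, so the only sets $C$ with $\{x\}\subseteq C\subseteq B$ are $\{x\}$ and $B$; in both cases $S(C)=\{x\}=S(B)$. This exhausts all possibilities, so $S$ satisfies $\widehat\alpha$ (and in fact even the stronger equality version used in the definition).

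For the failure of $\widehat\gamma$, I would exhibit a transitive three-alternative tournament $T=(\{a,b,c\},\succ)$ with $a\succ b$, $a\succ c$, and $b\succ c$. Setting $B=\{a,b\}$ and $C=\{a,c\}$, each restriction is a $2$-element tournament in which $a$ dominates, so $S(B)=S(C)=\{a\}$. However, $|B\cup C|=3$, and by construction $S(B\cup C)=\{a,b,c\}\neq\{a\}=S(B)$, witnessing the violation of $\widehat\gamma$.

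There is really no serious obstacle here: the whole argument amounts to picking the right ``pathological'' definition and then running a short case analysis. The only thing one has to be mildly careful about is the $|B|=2$ case of $\widehat\alpha$, since this is the single place where $S$ behaves nontrivially; but the two candidate refinements $\{x\}$ and $B$ both give the same choice set, so the property survives.
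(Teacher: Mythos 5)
Your construction and both verifications are correct: $S$ is a well-defined (isomorphism-invariant, non-empty) tournament solution, the case analysis for $\widehat\alpha$ is exhaustive, and the transitive three-cycle-free example does violate $\widehat\gamma$. However, your route differs from the paper's. The paper argues abstractly: for \emph{any} non-trivial stable tournament solution $S$, its root $\rootsym[S]$ satisfies $\widehat\alpha$ but violates $\widehat\gamma$, and the claim then follows from the existence of such a solution (e.g., \bp). Your proof is a concrete, self-contained instantiation of the same phenomenon: your $S$ excludes at most one alternative and is in fact exactly $\rootsym[S']$ for the (easily seen to be stable) tournament solution $S'$ that selects the Condorcet winner whenever one exists and all alternatives otherwise. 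What your approach buys is elementarity—no appeal to the stability of \bp (a non-trivial external result) or to the machinery of generators; everything is checked by hand in a few lines. What the paper's approach buys is generality and conceptual placement: it shows that the $\widehat\alpha$-without-$\widehat\gamma$ behaviour is not an artifact of one pathological example but is exhibited by the root of \emph{every} non-trivial stable tournament solution, which ties the proposition to the paper's structural theory of roots and generators.
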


\begin{proof}Let $S$ be a stable tournament solution.
As mentioned in \secref{sec:generators}, $\rootsym[S]$ satisfies $\widehat{\alpha}$. However, it is easily seen that, unless $S$ is trivial, $\rootsym[S]$ violates $\widehat{\gamma}$.
Hence, the statement follows from the existence of non-trivial stable tournament solutions (such as \bp).
\end{proof}

\begin{proposition}\label{prop:gamma-not-alpha}
	There exists a tournament solution that satisfies $\widehat\gamma$, but not $\widehat\alpha$. 
\end{proposition}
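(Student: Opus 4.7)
The goal is to exhibit a tournament solution satisfying $\widehat{\gamma}$ but not $\widehat{\alpha}$. My plan is to start from a stable tournament solution (which by \thmref{thm:BrHa} already satisfies both) and perturb it minimally, just enough to break $\widehat{\alpha}$ while preserving $\widehat{\gamma}$. Concretely, I would define
\[
S(T) = \begin{cases} A & \text{if } |A| \le 2, \\ \bp(T) & \text{otherwise.}\end{cases}
\]
The idea is that trivializing $\bp$ on two-element tournaments creates a mismatch when one shrinks a three-element tournament with a Condorcet winner down to a two-element subset, yet is too coarse to break $\widehat{\gamma}$: a choice set of size $\leq 2$ can equal $\bp(C)$ on a larger $C$ only by being contained in $C$, which collapses the $\widehat{\gamma}$ test.

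To witness the failure of $\widehat{\alpha}$, I would take the transitive tournament $B = \{a,b,c\}$ with Condorcet winner $a$. Since a Condorcet winner covers every other alternative, $\bp(B) = \{a\}$, and thus $S(B) = \{a\}$. Setting $C = \{a,b\}$ gives $S(B) \subseteq C \subseteq B$ while $S(C) = \{a,b\} \ne \{a\} = S(B)$, refuting $\widehat{\alpha}$.

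For $\widehat{\gamma}$, I would carry out a short case analysis assuming $S(B) = S(C)$. If $|B|, |C| \le 2$, then $B = S(B) = S(C) = C$ and the claim is trivial. If $|B| \le 2$ and $|C| \ge 3$ (or vice versa), then $B = \bp(C) \subseteq C$, hence $B \cup C = C$ and $S(B \cup C) = S(C) = \bp(C) = B = S(B)$. The substantive case is $|B|, |C| \ge 3$: here $\bp(B) = \bp(C)$ and $|B \cup C| \ge 3$, so $S(B \cup C) = \bp(B \cup C) = \bp(B) = S(B)$ by the $\widehat{\gamma}$ part of the stability of $\bp$. I expect the only real difficulty is conceptual, namely spotting that trivializing on $|A| \le 2$ is both strong enough to kill $\widehat{\alpha}$ (because a singleton choice set coming from a Condorcet winner is destroyed as soon as one passes to a two-element subset) and weak enough to respect $\widehat{\gamma}$ (because the perturbation only introduces choice sets that are forced to be subsets whenever they match $\bp$ on a larger tournament). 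Once the construction is in place, the verification is routine.
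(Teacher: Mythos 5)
Your proposal is correct and takes essentially the same approach as the paper: coarsen a stable tournament solution so that $\widehat{\gamma}$ is preserved while $\widehat{\alpha}$ fails on the transitive three-element tournament with a two-element subset as witness. The paper's variant returns all alternatives whenever the underlying stable solution (it uses \tc) excludes at most one alternative, rather than whenever $|A|\le 2$ as in your construction applied to \bp, but this is a minor difference and your case analysis for $\widehat{\gamma}$ is complete and correct.
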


\begin{proof}
	Let $S$ be a stable tournament solution. Define the tournament solution $S'$ such that for each tournament $T=(A,\succ)$,
	\[ S'(T) = 
	\begin{cases}
		S(T)  &\text{if } |A \setminus S(T)|>1 \\
		A      &\text{otherwise.}
	\end{cases}
	\]
	It can be shown that $S'$ satisfies $\widehat\gamma$, but may violate $\widehat\alpha$. For the latter, let $S=\tc$ and consider a transitive tournament $(\set{a,b,c},\succ)$ such that $a \succ b$, $b \succ c$, and $a \succ c$. By definition, $S'(\set{a,b,c})=\set{a}$, but $S'(\set{a,b})=\set{a,b}$.
\end{proof}
	
\subsection{Examples for \remref{rem:localalpha}}
\label{app:localalpha}

$BA$ satisfies local $\widehat{\alpha}$, but $\widehat{BA}=ME$ violates $\widehat{\alpha}$ \citep{BHS15a}. 

Similarly, there exists a tournament solution $S$ for which $\widehat{S}$ is well-defined, but $\widehat{S}$ is not stable.
For a stable tournament solution $S$, we have by definition that $S=\widehat{S}$ and hence that $\widehat{S}$ is also stable. The following proposition shows that $\widehat{\alpha}$ does not carry over from $S$ to $\widehat{S}$ even if $S$ is simple and $\widehat{S}$ is well-defined.

\begin{proposition}
\label{prop:examplenondirectedshat}
There exists a simple tournament solution $S$ satisfying $\widehat{\alpha}$ such that $\widehat{S}$ is well-defined but $\widehat{S}$ does not satisfy $\widehat{\alpha}$.
\end{proposition}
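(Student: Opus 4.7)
The plan is to construct a simple tournament solution $S$ on a small universe such that~(i)~$S$ satisfies $\widehat{\alpha}$, (ii)~$\widehat{S}$ is well-defined on every feasible set, but (iii)~$\widehat{S}$ fails $\widehat{\alpha}$. By \lemref{lemma:Shatgammahat}, $\widehat{S}$ automatically satisfies $\widehat{\gamma}$ whenever it is well-defined, so the failure of stability of $\widehat{S}$ must manifest as failure of $\widehat{\alpha}$. Furthermore, since $S$ is simple it trivially satisfies $\widehat{\alpha}_{_\subseteq}$, and \thmref{thm:ShatdirectedMSSP} tells us that stability of $\widehat{S}$ is equivalent to well-definedness of $\widehat{S}$ together with $S$ satisfying local $\widehat{\alpha}$. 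The task thus reduces to producing a simple $S$ satisfying $\widehat{\alpha}$ whose $\widehat{S}$ is well-defined but for which $S$ violates local $\widehat{\alpha}$.

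To realize a failure of local $\widehat{\alpha}$ I would identify a chain $X' \subsetneq X \subseteq Y \subseteq Z$ with $X$ minimally $S$-stable in $Z$ yet $X' \subsetneq X$ being $S$-stable in $Y$. Since external $S$-stability of $X$ is inherited from $Z$ to $Y$, $X$ remains $S$-stable in $Y$; hence $\widehat{S}(Y) \subseteq X' \subsetneq X = \widehat{S}(Z)$, which together with $\widehat{S}(Z) \subseteq Y \subseteq Z$ directly witnesses the failure of $\widehat{\alpha}$ for $\widehat{S}$ at the pair $(Z, Y)$. Concretely, I would take $|Z|$ to be six or seven and design a tournament $T^*$ on $Z$ so that the subtournaments $T^*|_{X' \cup \{g\}}$ for $g \in Y \setminus X'$ all share a common isomorphism type in which the added alternative plays a graph-theoretically distinguishable role, the subtournaments $T^*|_{X \cup \{g\}}$ for $g \in Z \setminus X$ share a second such common type, and every other subtournament---in particular the $T^*|_{X' \cup \{g\}}$ for $g \in Z \setminus Y$---falls into an isomorphism type on which no alternative is singled out. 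Defining $S$ to excise the distinguished alternative on exactly the two triggering types, and to act as the identity on every other type, makes $S$ a simple tournament solution; $\widehat{\alpha}$ for $S$ then reduces to the easily verified idempotency condition that the result of each excision is fixed by $S$.

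The main obstacle lies in the verification: selecting $T^*$ carefully so that the triggering isomorphism types arise exactly on the subtournaments intended, and not accidentally elsewhere, and checking that $\widehat{S}$ remains well-defined on every one of the $2^{|Z|} - 1$ feasible subsets. Because only a limited number of isomorphism classes of tournaments on four or five vertices exist, the design of $T^*$ must be sufficiently asymmetric to keep the triggering types rare; the actual verification is then a finite but potentially tedious case analysis over all subtournaments of $T^*$.
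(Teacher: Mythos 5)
Your skeleton is sound: for a simple $S$, $\widehat{\alpha}$ does reduce to idempotency of the excisions, the chain argument ($X'\subsetneq X\subseteq Y\subseteq Z$ with $X$ minimally $S$-stable in $Z$ and $X'$ $S$-stable in $Y$) correctly yields $\widehat{S}(Y)\subseteq X'\subsetneq X=\widehat{S}(Z)$ and hence a violation of $\widehat{\alpha}$ once $\widehat{S}$ is well-defined, and building $S$ by excising a distinguished alternative on a few isomorphism types is exactly the right kind of construction. But as it stands the proposal has a genuine gap: it never exhibits the witness. The tournament $T^*$, the two triggering isomorphism types, and the check that these types do not occur ``accidentally elsewhere'' are precisely the mathematical content of this existence statement, and you explicitly defer all of them (``design a sufficiently asymmetric tournament\dots the actual verification is then a finite but potentially tedious case analysis''). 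An existence proof needs the concrete object and the verification carried out, not a description of what a successful object would look like.

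There is also a scoping error in the verification plan. $S$ is a \emph{tournament solution}, defined on every tournament over $U$ through its isomorphism type, so well-definedness of $\widehat{S}$ must be established for arbitrary tournaments, not just for the $2^{|Z|}-1$ subsets of one fixed $T^*$; arbitrarily large tournaments can contain many disjoint copies of your triggering types, and you must rule out two distinct minimal $S$-stable sets there as well. This is doable---since exclusions happen only in two small types, any two distinct minimal $S$-stable sets $B,C$ are already $S$-stable in $B\cup C$, which forces a small configuration---but your plan does not address it. For comparison, the paper instantiates your template far more economically: take $S$ trivial except that it picks the Condorcet winner from tournaments of order $2$ and picks $\{a,b,c\}$ from the single order-$4$ tournament $T_4$ of \figref{fig:secretary}. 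Then $\widehat{S}(T_4)=\{a,b,c\}$ while $\widehat{S}(\{a,b,c\})=\{a\}$ (your chain with $Y=X=\{a,b,c\}$, $X'=\{a\}$, $Z=T_4$), and well-definedness follows from a short case analysis because every proper $S$-stable subset of any tournament must be a dominating singleton or a transitive $3$-set completing a copy of $T_4$, so two distinct minimal $S$-stable sets would have to coexist inside a tournament of order at most $6$, which one checks is impossible. Supplying this kind of concrete construction and finishing the case analysis is what your proposal is missing.
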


\begin{proof}
Let $S$ be the tournament solution that always chooses all alternatives, with two exceptions:
\begin{itemize}
\item If the tournament is of order 2, then $S$ chooses only the Condorcet winner.
\item If the tournament is the tournament $T_4$ given in \figref{fig:secretary}, then $S$ chooses alternatives $a, b,$ and $c$.
\end{itemize}

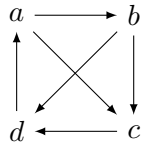
\begin{figure}[htb] %% 4 vertices, secretary graph
	\centering
		\begin{tikzpicture}[]
			\node (a) at (0,0) {$a$};
			\node (b) [right of=a] {$b$};
			\node (c) [below of=b] {$c$};
			\node (d) [left of=c] {$d$};
			\foreach \x / \y in {a/b,a/c,b/c,b/d,c/d,d/a}{
			\draw[-latex] (\x) to (\y);
			};
		%	\node (caption) [node distance = 0.5*\nd,below of=d, xshift=0.5*\nd] {$T$};
		\end{tikzpicture}

	\caption{Tournament $T_4$.}
	\label{fig:secretary}
\end{figure}

Clearly, $S$ is simple and satisfies $\widehat{\alpha}$. Since $\widehat{S}$ chooses alternatives $a,b,$ and $c$ from $T_4$, but chooses only the Condorcet winner from the transitive tournament of order 3, it does not satisfy $\widehat{\alpha}$.

It remains to show that $\widehat{S}$ is well-defined. One can check that every tournament contains an $S$-stable set. Suppose for contradiction that some tournament $T$ contains two distinct minimal $S$-stable sets, which we denote by $B$ and $C$. Then $B$ and $C$ are also $S$-stable in $B\cup C$. If $B$ is a singleton, then $B$ is the Condorcet winner in $B\cup C$, which means $C$ cannot be $S$-stable, a contradiction. Hence both $B$ and $C$ are transitive tournaments of order 3, and $4\leq|B\cup C|\leq 6$. One can check all the possibilities of $B\cup C$ to conclude that this case is also impossible.
\end{proof}

For the tournament solution $S$ defined in the proof of Proposition \ref{prop:examplenondirectedshat}, we have that $\widehat{\widehat{S}}$ is not well-defined. Even though no tournament contains two distinct minimal $\widehat{S}$-stable sets, $T_4$ does not contain any $\widehat{S}$-stable set. This example also shows that for a tournament solution $S'$, $\widehat{S'}$ may fail to be well-defined not because it allows two distinct minimal $S'$-stable sets in a tournament but because some tournament contains no $S'$-stable set.

\subsection{Example for \remref{rem:pos}}
\label{app:pos}

We show that there exists a tournament solution different from \bp that satisfies \lrs, monotonicity, regularity, and Condorcet consistency.

To this end, we define a new tournament solution called \pos which chooses all alternatives with positive relative degree. More precisely, an alternative is chosen by \pos if it dominates strictly more than half of the remaining alternatives, is not chosen if it dominates strictly less than half of the remaining alternatives, and goes to a ``tie-break'' to determine whether it is chosen if it dominates exactly half of the remaining alternatives. 

For tournaments of even size, \pos chooses exactly the alternatives that dominate at least (or equivalently, more than) half of the remaining alternatives. Hence we do not need a tie-break for tournaments of even size. The tie-breaking rule for tournaments of odd size $2n+1$ is as follows: For any (unlabeled) tournament $T$ of order $2n$ and any partition of it into two sets $B$ and $C$ of size $n$, consider two tournaments $T_1$ and $T_2$ of order $2n+1$. The tournament $T_1$ contains $T$ and another alternative $a$ that dominates $B$ but is dominated by $C$, while the tournament $T_2$ contains $T$ and another alternative $a$ that dominates $C$ but is dominated by $B$. If $T_1$ or $T_2$ is regular, \pos chooses $a$ in that tournament and not in the other one. Otherwise, \pos arbitrarily chooses $a$ in exactly one of $T_1$ and $T_2$.

\begin{proposition}
\label{prop:poslrs}
\pos satisfies \lrs, monotonicity, and regularity.
\end{proposition}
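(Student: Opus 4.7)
The plan is to verify \lrs, regularity, and monotonicity separately, leveraging the observation that the tie-break in the definition of \pos is engineered precisely to force \lrs at tie-break alternatives, while regularity and monotonicity fall out as by-products of the same bookkeeping.

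For \lrs, I would fix a tournament $T=(A,\succ)$ of size $m$ and an alternative $a$, and set $d=|D_T(a)|$ so that $|D_{T^a}(a)| = (m-1)-d$. A case split on parity handles everything cleanly. If $m$ is even, then $(m-1)/2$ is not an integer, so no tie-break is invoked and the equivalence $a\in\pos(T)\iff a\notin\pos(T^a)$ reduces to $d \geq m/2 \iff (m-1)-d < m/2$. If $m=2n+1$ and $d\neq n$, then one of $d$ and $2n-d$ is $>n$ and the other is $<n$, again giving the equivalence directly. The interesting case is $m=2n+1$ with $d=n$: both $T$ and $T^a$ trigger the tie-break at $a$, and the unordered pair $\{T,T^a\}$ is exactly the pair $\{T_1,T_2\}$ associated with $T|_{A\setminus\{a\}}$ and its partition into $D_T(a)$ and $\dom_T(a)$. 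By construction the tie-break rule places $a$ in exactly one of $T_1,T_2$, which is precisely \lrs for this $a$.

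Regularity follows as a by-product: for a regular tournament $T$ of order $2n+1$, every alternative $a$ has $|D_T(a)|=n$ and so triggers the tie-break; since $T$ itself is the $T_1$ associated with $T|_{A\setminus\{a\}}$ and the partition $(D_T(a),\dom_T(a))$, the ``regular $T_1$'' clause selects $a$, hence $\pos(T)=A$. For monotonicity, suppose $a\in\pos(T)$ and $T'$ is obtained by strengthening $a$; then $|D_{T'}(a)| \geq |D_T(a)|$ since only edges incident to $a$ change. A short case analysis shows $a$ remains chosen: if $|D_T(a)|$ strictly exceeded the threshold it still does; if $|D_T(a)|$ equalled the threshold (odd size, tie-break), then either the out-degree in $T'$ strictly grew (so $a$ is chosen outright in $T'$) or no edge changed at all (so $T'=T$).

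The only subtlety is checking that the tie-break rule is actually well-defined: a priori the ``regular'' clause could conflict if both $T_1$ and $T_2$ were regular. A brief parity check rules this out, since regularity of $T_1$ forces the $n$ vertices of $B$ to have out-degree $n$ in $T$, hence out-degree $n+1$ in $T_2$, contradicting regularity of $T_2$. Beyond this, no conceptual obstacle arises, because \pos was designed precisely so that its tie-break respects local reversals.
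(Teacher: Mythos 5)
Your verification of \lrs, regularity, and monotonicity from the definition is correct and fleshes out what the paper dismisses with ``it follows directly from the definition,'' and your parity argument that $T_1$ and $T_2$ cannot both be regular is exactly the paper's second well-definedness point. However, there is a genuine gap where you assert that ``beyond this, no conceptual obstacle arises.'' The tie-breaking rule requires \pos to choose $a$ in \emph{exactly one} of the pointed tournaments $(T_1,a)$ and $(T_2,a)=((T_1)^a,a)$, and \pos must be a tournament solution, hence invariant under tournament isomorphisms (as assumed in Section~\ref{sec:prelims}). If for some instance $T_1$ were isomorphic to $T_2=(T_1)^a$ via an isomorphism mapping $a$ to itself, then isomorphism-invariance would force the same decision about $a$ in both tournaments, so the clause ``choose $a$ in exactly one of $T_1$ and $T_2$'' could not be implemented at all; your tie-break case of \lrs silently assumes this situation never occurs. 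Ruling it out is precisely the first---and least obvious---part of the paper's proof: the paper observes that if such an isomorphism existed, then \emph{no} tournament solution could satisfy \lrs at $a$, contradicting the fact that \bp satisfies \lrs (Theorem~\ref{thm:bp-lrs}). Your degree bookkeeping cannot close this hole, since $a$ has the same out-degree $n$ in $T_1$ and in $T_2$, so an argument of this kind (or some other invariant separating $(T_1,a)$ from $(T_2,a)$) must be added to complete the proof; with that addition, the rest of your argument goes through.
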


\begin{proof}
We need to show that the tie-breaking rule in the definition of \pos is well-defined. First, we show that if we perform a local reversal on alternative $a$, we do not get an isomorphic tournament with alternative $a$ mapped to itself. Indeed, if $a$ were mapped to itself, it would mean that no tournament solution satisfies \lrs, which we know is not true since \bp satisfies \lrs. Secondly, the tournament obtained by performing a local reversal on an alternative in a regular tournament is not regular. Hence we do not obtain a conflict within the tie-breaking rule.

It follows directly from the definition that \pos satisfies \lrs, monotonicity, and regularity.
\end{proof}

The tournament $T_4$ given in \figref{fig:secretary} shows that \pos violates composition-consistency and $\widehat{\alpha}$ (and hence stability).

Interestingly, \bp (and all of its coarsenings) always intersect with \pos while there exists a tournament for which \ba (and all of its refinements such as \teq and \me) do \emph{not} overlap with \pos. This follows from results on the \emph{Copeland value} by \citet{LLL93b,LLL94c}.

\subsection{Example for \remref{rem:s7hat}}
\label{app:s7hat}

We construct a tournament solution that satisfies monotonicity and stability, but violates regularity and composition-consistency.

Every tournament solution has to be regular on tournaments of order 5 or less because of non-trivial automorphisms.
%There are three regular tournaments of order 7 up to isomorphism: the cyclone or order 7 $C_7$, the quadratic residue tournament of order 7 $Q_7$, and a third tournament which we call $R_7$. 
Consider the tournament $T_7$ shown in Figure \ref{fig:R7}. $T_7$ admits a unique nontrivial automorphism that maps each of the six alternatives in the two 3-cycles to the next alternative in its 3-cycle and maps alternative $g$ to itself.\footnote{Note that $(T_7)^g$ is the smallest tournament in which $BA$ and $UC$ differ \citep{BDS13a}.}

%Denote by $R_6$ the tournament $R_7\backslash\{g\}$. One can check that $\{a,e\},\{b,f\}$, and $\{c,d\}$ form components in $R_6$. 

	%%%% the R_7 %%%%
	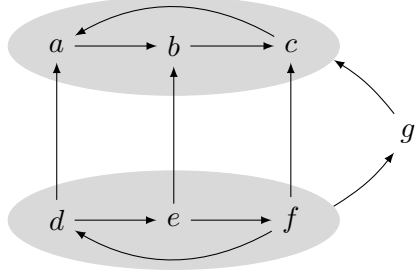
\begin{figure}
	\centering
	\tikzstyle{every ellipse node}=[draw,inner xsep=4em,inner ysep=1.2em,fill=black!15!white,draw=black!15!white]
	\begin{tikzpicture}
	 	\draw (0,1.5*\nd) node[ellipse] (ellipse1){}		++(-\nd,0)	node(a){$a$} ++(\nd,0) 	node(b){$b$} ++(\nd,0) node(c){$c$} ;
		\draw (0,0) node[ellipse] (ellipse2){}		++(-\nd,0)	node(d){$d$} ++(\nd,0) 	node(e){$e$} ++(\nd,0) node(f){$f$} ;
		\node (g) at ($ (c) !.5! (f)$)[xshift=1*\nd] {$g$};
	 	%\draw (0,2.5) node[ellipse] (ellipse1){}		++(-1.5,0)	node[vertex](a){$a$} ++(1.5,0) 	node[vertex](b){$b$} ++(1.5,0) node[vertex](c){$c$} ;
		%\draw (0,0) node[ellipse] (ellipse2){}		++(-1.5,0)	node[vertex](d){$d$} ++(1.5,0) 	node[vertex](e){$e$} ++(1.5,0) node[vertex](f){$f$} ;
		%\node[vertex] (g) at ($ (c) !.5! (f)$)[xshift=1.2*\nd] {$g$};
		\draw [-latex] (ellipse2.5) to [bend right=10] (g);
		\draw [-latex] (g) to [bend right=10] (ellipse1.355);
		\draw [-latex] (a) to (b);
		\draw [-latex] (b) to (c);
		\draw [-latex] (d) to (e);
		\draw [-latex] (e) to (f);
		\draw [-latex] (d) to (a);
		\draw [-latex] (e) to (b);
		\draw [-latex] (f) to (c);
		\draw [-latex] (c) to[bend right=30] (a);
		\draw [-latex] (f) to[bend left=30] (d);
	\end{tikzpicture}
	\caption{Tournament $T_7$. $g\succ\{a,b,c\}$, $\{d,e,f\}\succ g$, and all omitted edges point downwards.}
	\label{fig:R7}
	\end{figure}

Now, define the simple tournament solution $S_7$, which always returns all alternatives unless the tournament is $T_7$ or it can be modified from $T_7$ by weakening alternative $g$. In the latter case, $S_7$ returns all alternatives except $g$.

We check that this definition is sound. First, we know that in $T_7$, there is no automorphism that maps alternative $g$ to another alternative. When we weaken $g$, it is the unique alternative with the smallest out-degree, and hence cannot be mapped by an automorphism to another alternative. Now, the alternatives $a,b,c$ form an orbit, and $S_7$ excludes $g$ whenever it is dominated by $d,e,f$ (and has any dominance relationship to $a,b,c$). This yields four non-isomorphic tournaments for which $S_7$ excludes $g$.

\begin{proposition}
$\widehat{S_7}$ satisfies stability and monotonicity.
\end{proposition}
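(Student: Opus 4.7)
The plan is to apply \thmref{thm:ShatdirectedMSSP}: since $S_7$ is simple, it automatically satisfies $\widehat{\alpha}_{_\subseteq}$, so stability of $\widehat{S_7}$ reduces to establishing (a) well-definedness of $\widehat{S_7}$ and (b) local $\widehat{\alpha}$ for $S_7$. Once stability is in hand, \thmref{thm:generatorS} tells us that $S_7$ is the unique simple generator of $\widehat{S_7}$, so $S_7 = \rootsym[\widehat{S_7}]$, and by \thmref{thm:mon} it suffices to show that $S_7$ itself is monotonic.

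The backbone of the argument is a structural characterization of $S_7$-stable sets. Let $R$ denote the $6$-vertex subtournament of $T_7$ on $\{a,b,c,d,e,f\}$; this is the common ``non-$g$ part'' of all four exceptional tournaments. Since $S_7$ is simple and excludes exactly the $g$-vertex of an exceptional tournament, one checks that $X\subseteq A$ is $S_7$-stable in $A$ if and only if either $X=A$ and $A$ is not exceptional, or $|X|=6$, $X\cong R$, and every $v\in A\setminus X$ is dominated by the image of $\{d,e,f\}$ within $X$, so that $X\cup\{v\}$ is exceptional with $v$ in the $g$-position. In particular, every proper $S_7$-stable set has cardinality exactly six, so any two distinct proper $S_7$-stable sets in $A$ are automatically both inclusion-minimal.

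The main obstacle is therefore uniqueness of $6$-element $S_7$-stable subsets, which is required for well-definedness. Suppose $X_1\neq X_2$ were two such subsets, with bottom $3$-cycles $B_i$ (unambiguously the three out-degree-$2$ vertices of $X_i\cong R$) and top $3$-cycles $T_i$; external stability forces $B_i \succ (A\setminus X_i)$. If $B_1\cap X_2=\emptyset$ then $B_2\succ B_1$; when additionally $B_2\cap X_1=\emptyset$, the symmetric argument yields $B_1\succ B_2$ and thus a contradiction, while if some $b\in B_2\cap X_1$ exists then $b$ must lie in $T_1$ (since $B_2\cap B_1=\emptyset$) and, as a top vertex of $R$, dominates only two of $B_1$, contradicting $B_2\succ B_1$. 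The remaining case where both $B_1\cap X_2$ and $B_2\cap X_1$ are nonempty is handled by exploiting the rigid bijection in $R$ under which each vertex of $\{d,e,f\}$ dominates a unique vertex of $\{a,b,c\}$: tracking where the ``extra'' vertices of $B_1\triangle B_2$ and $T_1\triangle T_2$ must land (using that any element of $X_1\setminus X_2$ must be dominated by all of $B_2$) eventually forces $X_1=X_2$. With uniqueness settled, local $\widehat{\alpha}$ is immediate, because a $6$-element $S_7$-stable $X$ in $Z$ trivially inherits external stability in any intermediate $Y$ with $X\subseteq Y\subseteq Z$ and contains no proper $S_7$-stable subset, since any such subset would itself need cardinality six.

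For monotonicity of $S_7$, suppose $a\in S_7(T)$, so $T$ is not exceptional with $a$ in the $g$-position, and let $T'$ be obtained from $T$ by strengthening $a$. If $T'$ were exceptional with $a$ as $g$, then the three vertices playing the role of $\{d,e,f\}$ in $T'$ would dominate $a$ there and hence, since strengthening $a$ cannot introduce new in-edges for $a$, would also dominate $a$ in $T$; moreover $T|_{A\setminus\{a\}}=T'|_{A\setminus\{a\}}\cong R$, and $a$'s wins against the top cycle in $T$ are a subset of those in $T'$. Thus $T$ would already be one of the four exceptional tournaments with $a$ in the $g$-position, contradicting $a\in S_7(T)$. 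Hence $a\in S_7(T')$, completing the proof of stability and monotonicity of $\widehat{S_7}$ via \thmref{thm:ShatdirectedMSSP} and \thmref{thm:mon}.
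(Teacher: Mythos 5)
Your high-level route is the same as the paper's: reduce stability to well-definedness of $\widehat{S_7}$ plus local $\widehat{\alpha}$ of $S_7$ via Theorem~\ref{thm:ShatdirectedMSSP}, observe that every proper $S_7$-stable set is a $6$-element copy of $R$ whose bottom $3$-cycle dominates all outside alternatives, and transfer monotonicity from the generator $S_7$ to $\widehat{S_7}$ via Theorem~\ref{thm:mon}. Those parts, and your direct verification that $S_7$ is monotonic, are fine. The genuine gap is in the uniqueness argument, which is where all the substance of the proposition lies. Your case $B_1\cap X_2=\emptyset$ is handled correctly, but the ``remaining case'' (both $B_1\cap X_2$ and $B_2\cap X_1$ nonempty) is dismissed with the claim that tracking where the extra vertices of $B_1\triangle B_2$ and $T_1\triangle T_2$ ``must land eventually forces $X_1=X_2$.'' No configuration is analyzed and no contradiction is exhibited; this is an assertion, not a proof, and it is precisely the case that requires work. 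The paper closes it with two short observations you could have used: (i) if both $B_1\setminus X_2$ and $B_2\setminus X_1$ are nonempty, pick $x\in B_1\setminus X_2$ and $y\in B_2\setminus X_1$; external stability of $X_1$ gives $x\succ y$ while external stability of $X_2$ gives $y\succ x$, a contradiction (this subsumes most of your remaining case and is the same cross-domination idea as your case (a)); (ii) otherwise, without loss of generality $B_1\subseteq X_2$, and any $v\in X_2\setminus X_1$ (nonempty since $X_1\neq X_2$) is dominated by the $3$-cycle $B_1$ inside the copy $X_2\cong R$, which is impossible because every in-neighborhood in $R$ has size two or is a transitive triple. In particular the correct conclusion is that two distinct $6$-element stable sets cannot coexist at all; nothing is ``eventually forced'' to coincide.

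A second, smaller omission: well-definedness of $\widehat{S_7}$ also requires that every feasible set contains at least one $S_7$-stable set, which you never address. It is easy (a non-exceptional tournament is itself $S_7$-stable, and an exceptional order-$7$ tournament has the stable set obtained by deleting its $g$-vertex), but it must be stated, since your argument only rules out multiple minimal stable sets. Finally, your inference ``$S_7=\rootsym[\widehat{S_7}]$'' leans on the uniqueness of simple generators, which the paper only records in a footnote combining Theorems~\ref{thm:generatorS} and~\ref{thm:ShatdirectedMSSP}; it would be simpler to invoke directly the fact, noted in the proof of Theorem~\ref{thm:mon}, that monotonicity of any generator of a stable tournament solution carries over to the solution itself.
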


\begin{proof}
First, observe that $S_7$ trivially satisfies local $\widehat{\alpha}$ because $S_7$ only excludes an alternative in tournaments of order 7. By virtue of \thmref{thm:ShatdirectedMSSP}, it therefore suffices to show that $\widehat{S_7}$ is well-defined.
	
One can check that every tournament contains an $S_7$-stable set. Let $T_6$ denote the tournament obtained by removing alternative $g$ from $T_7$. Suppose for contradiction that there exists a tournament $T$ that contains two distinct minimal $S_7$-stable sets, which we denote by $B$ and $C$. Then $B$ and $C$ are also $S_7$-stable in $B\cup C$. Moreover, $T|_{B}$ must correspond to the tournament $T_6$, and each alternative in $C\backslash B$ either has the same dominance relation to $B$ as the alternative $g$ does to $T_6$ or has a dominance relation that is a weakening of $g$. The same statement holds for $C$. We consider the following cases.

\textit{Case 1}: $10\leq |B\cup C|\leq 11$. The tournament $T|_{B}$ has one of its alternatives corresponding to alternatives $d$, $e$, and $f$ in \figref{fig:R7} outside of $B\cap C$, and this alternative must dominate all of the alternatives in $C$. Similarly, there exists an alternative in $C\backslash B$ that dominates all of the alternatives in $B$. But this implies that some two alternatives dominate each other, a contradiction.

\textit{Case 2}: $7\leq |B\cup C|\leq 9$. At least one of the two tournaments $T|_{B}$ and $T|_{C}$ must have all of its alternatives corresponding to alternatives $d$, $e$, and $f$ in \figref{fig:R7} in the intersection $B\cap C$, for otherwise we obtain a contradiction in the same way as in Case 1. Assume without loss of generality that $T|_{B}$ has its alternatives corresponding to $d$, $e$, and $f$ in the intersection. Hence three alternatives in $B\cap C$ that form a cycle dominate the same alternative in $C$. But this does not occur in $T_6$, a contradiction.

It follows from \thmref{thm:mon} that $\widehat{S_7}$ satisfies monotonicity.
\end{proof}

Clearly, $\widehat{S_7}$ is not regular since it excludes an alternative from the regular tournament $T_7$. Moreover, it is not a coarsening of \bp since \bp selects all of the alternatives in $T_7$. Hence we have that stable and monotonic tournament solutions are not necessarily coarsenings of \bp.

\end{document}